\definecolor{darkblue}{rgb}{0,0,0.8}
 \gdef\xxxmark{%
   \expandafter\ifx\csname @mpargs\endcsname\relax 
     \expandafter\ifx\csname @captype\endcsname\relax 
       \marginpar{xxx}
     \else
       xxx 
     \fi
   \else
     xxx 
   \fi}
 \gdef\xxx{\@ifnextchar[\xxx@lab\xxx@nolab}
 \long\gdef\xxx@lab[#1]#2{\textbf{[\xxxmark #2 ---{\sc #1}]}}
 \long\gdef\xxx@nolab#1{\textbf{[\xxxmark #1]}}
\def\STATE{\State}
\def\WHILE{\While}
\def\ENDWHILE{\EndWhile}
\def\COMMENT{\Comment}
\def\GETS{\gets}
 \newtheorem{theorem}{Theorem} 
 \newtheorem{lemma}[theorem]{Lemma}
\newenvironment{proofof}[1]{\begin{proof}[Proof of {#1}]}{\end{proof}}
\newcommand{\opt}{\textsf{Opt}\xspace}
\newcommand{\Pure}{\textsf{Pure}\xspace}
\newcommand{\Gold}{\textsf{Gold}\xspace}
\renewcommand{\Pr}{\ensuremath{\mathbf{Pr}}}
\newcommand{\NPclass}{\textsf{NP}\xspace}
\newcommand{\DTIMEclass}{\textsf{DTIME}\xspace}
\newcommand{\eps}{\ensuremath{\varepsilon}}
\newcommand{\Alg}{\ensuremath{\mathcal A}\xspace}
\newcommand{\elements}{\ensuremath{\mathcal E}\xspace}
\newcommand{\sets}{\ensuremath{\mathcal F}\xspace}
\newcommand{\solution}{\ensuremath{\mathcal S}\xspace}
\newcommand{\cov}{\ensuremath{\mathcal C}\xspace}
\renewcommand{\deg}{\operatorname{deg}}
\newcommand{\set}[1]{\ensuremath{\{#1\}}\xspace}
\newcommand{\Ex}{\ensuremath{\mathbb E}}
\newcommand{\LJinstx}{\texttt{livej-2}}
\newcommand{\LJinst}{\texttt{livej-3}}
\newcommand{\DBLPinstx}{\texttt{dblp-2}}
\newcommand{\DBLPinst}{\texttt{dblp-3}}
\newcommand{\Guteinst}{\texttt{gutenberg}}
\newcommand{\SGuteinst}{\texttt{s-gutenberg}}
\newcommand{\Hardinst}[1]{\texttt{planted-#1}}
\newcommand{\WikiMain}{\texttt{wiki-main}}
\newcommand{\WikiTalk}{\texttt{wiki-talk}}
\newcommand{\Reuters}{\texttt{reuters}}
\newcommand{\NewsTwenty}{\texttt{news20}}
\newcolumntype{K}[1]{>{\centering\arraybackslash}p{#1}}
\newcommand{\problem}[1]{\textsf{#1}}
\newcommand{\algoname}[1]{\textsc{#1}}
\newenvironment{packed_enum}{
\begin{enumerate}
  \setlength{\itemsep}{1pt}
  \setlength{\parskip}{0pt}
  \setlength{\parsep}{0pt}
}{\end{enumerate}}
\newenvironment{packed_item}{
\begin{itemize}
  \setlength{\itemsep}{1pt}
  \setlength{\parskip}{0pt}
  \setlength{\parsep}{0pt}
}{\end{itemize}}
\newcommand{\union}{\ensuremath{\cup}}
\newcommand\myhline[1]{%
  \noalign{%
    \global\dimen1\arrayrulewidth%
    \global\arrayrulewidth#1
  }\hline
  \noalign{%
    \global\arrayrulewidth\dimen1 
  }%
}
\title{Distributed Coverage Maximization via Sketching }
\author{
	MohammadHossein Bateni\\Google Research
	\and Hossein Esfandiari\\University of Maryland
	\and Vahab Mirrokni\\Google Research
}
\date{}
\begin{document}

\sloppy
\maketitle

\begin{abstract}
In this paper, we present
distributed algorithms for coverage optimization problems
with almost optimal space complexity and optimal approximation
guarantees. These new algorithms also achieve an optimal communication
complexity, running in only four rounds of computation, addressing
major limitations of prior work.  While previous distributed
algorithms for submodular maximization rely on ideas of core-sets, our
algorithms are based on a new adaptive sampling and sketching technique.
We show that the proposed algorithms are implementable in various distributed
optimization frameworks such as MapReduce and RAM models.  Moreover,
our ideas extend to weighted variants of coverage
problems, and can solve the related \problem{dominating set} problems.

Furthermore, we perform an extensive empirical study of our algorithms
(implemented in MapReduce) on a variety of datasets.  We observe that
using sketches $30$--$600$ times smaller than the input, one can solve
the coverage maximization problem with quality very close to that of
the state-of-the-art single-machine algorithm.  Finally, we show an
application of our algorithm in large-scale feature selection.

\end{abstract}

\section{Introduction}
As important special cases of submodular optimization, \problem{maximum
$k$-cover} and \problem{minimum set cover} are among the most central problems in
optimization with a wide range of applications in machine learning,
document summarization, and information retrieval; e.g.,
see~\cite{CKT10,AMT,pods12,nips13}.
In order to address the need for handling large datasets, 
many techniques have been developed for distributed submodular
maximization~\cite{BST12,KMVV13,nips13,IMMM14,karbasiKDD2014,MZ15,BENW15}.
However, many of these results do not take advantage of the special structure of coverage
functions, and consequently achieve suboptimal approximation
guarantees and/or poor space complexities in terms of the size
of the (coverage) instance. In particular,
most previous results on submodular
maximization either explicitly or implicitly assume a {\em value oracle
access} to the submodular function. Such an oracle for coverage
 functions has the following form: given a subfamily of the (input)
 family, determine the size of the union of the sets in the
 subfamily. Implementing this subroutine is costly in the presence of large subsets in the
 family and/or a large ground set. Indeed communicating entire subsets across machines might be
 quite impractical. In this paper, we aim to address the above issues, and present almost optimal distributed approximation algorithms for coverage problems with optimal communication and space complexity. Before elaborating on our results, let us  describe the problem formulations, and distributed computation models discussed later.

\paragraph{Problem Formulation} Consider a ground set \elements of $m$ elements, and a family
$\sets\subseteq 2^\elements$ of $n$ subsets of the elements (i.e.,
$n=|\sets|$ and $m=|\elements|$).%
\footnote{There are two separate series of work in this area.  We use the convention of the submodular/welfare maximization  formulation~\cite{badanidiyuru2012sketching}, whereas the hypergraph-based formulation~\cite{saha2009maximum} typically uses $n, m$ in the opposite way.}  
 The {\em coverage function} $\cov$ is
defined as $\cov(\solution) = |{\cup}_{U\in\solution} U|$ for any subfamily $\solution\subseteq
\sets$ of subsets.
Given $k \geq 0$, the goal in \problem{$k$-cover} is to
pick $k$ sets from $\sets$ with the largest union size.
\problem{Set cover} asks for the minimum number of
sets from $\sets$ that together cover $\elements$ entirely.
In this paper, we also study the following variant of this
problem, called \problem{set cover with $\lambda$ outliers}%
\footnote{It is somtimes called \problem{$(1-\lambda)$-%
  partial cover} in the literature.}, where the goal is to find
the minimum number of sets covering at least a $1-\lambda$ fraction of
the elements~\elements.

\paragraph{MapReduce Model} The distributed computation model---e.g.,
MapReduce~\cite{osdi-DG04}---assumes that the data is split across
multiple machines. In each round of a distributed algorithm, the data
is processed in parallel on all machines: Each machine waits to
receive messages sent to it in the previous round, performs its
own computation, and finally sends messages to the other machines.
The total amount of data a machine processes is called its {\em load}, which had
better be sublinear in the input size.  In fact, two important factors
determine the performance of a distributed algorithm:
(i) the number of rounds of computation, and (ii) the maximum load
on any machine. These parameters have been discussed and optimized in
previous work
~\cite{soda-KSV10,IMMM14
,BBLM14,MZ15}.

\paragraph{RAM Model} Another model for handling a large amount of data is what we call the
{\em RAM model}~\cite{aho1974design}, where the algorithm has random
access to any part of the input (say, to the edge lists in the graph)
but each lookup takes constant time.  For many problems it might be
possible to judiciously and adaptively query the data, and solve the
problem.  In order to implement such a model in practice, distributed
hash-tables (such as Bigtable) have been proposed and applied in
practice~\cite{bigtable-paper}.  From a theoretical point of view, this
model is closely related to the communication complexity literature.

\paragraph{Our Contributions} In this paper, we present
distributed algorithms for \problem{$k$-cover} and \problem{set cover}
addressing several shortcomings of previously studied algorithms
and achieving optimal approximation guarantees
as well as almost optimal space and communication complexity.
To achieve this result, we present an adaptive sampling (or sketching) technique
that can be implemented in a distributed manner. We also rule out
effectiveness of various simpler sampling techniques by providing
lower bound examples.
More precisely, our results for coverage problems are as follows:
First of all, we develop distributed algorithms
for \problem{$k$-cover} and \problem{set cover with $\lambda$
outliers}, that are almost optimal from three perspectives: (i) they
achieve optimal approximation guarantees of $1-1/e$ and $\log
{1\over \lambda}$ for the above two problems, respectively; (ii) they
have a memory complexity of $\tilde O(n)$ and also $\tilde O(n)$
communication complexity; and finally (iii) they run in a few
(constant) rounds of computation; see Table~\ref{tab:distributed} for
brief comparison of our theoretical results with prior work
(Sections~\ref{sec:MapReduce} and~\ref{sec:RAM}).  We note that the
space complexity of our algorithm is independent of the size of the
universe of elements and is only a linear function of the number of
input sets. This is crucial for tackling coverage instances with very
large sets, or large total number of elements.

\begin{table*}
  \caption{Comparison of our results to prior work.  The first three work for the more general case of \problem{submodular maximization}.}
  \label{tab:distributed}
  \begin{center}
    \begin{tabular}{ccccc} 		
      \hline
      Problem  & Credit & \# rounds & Approximation & \small Load per machine \\ 
      \hline 
      \problem{$k$-cover} & \cite{KMVV13} &
      $O(\frac{1}{\eps\delta}\log m)$ & $1- \frac 1 e-\eps$ &
      $O(mkn^{\delta})$   \\ 
      \problem{$k$-cover} & \cite{MZ15} & 			  $2$ & $0.54$
      & \small $\max(m k^2, mn/k)$   \\ 
      \problem{$k$-cover} & \cite{BENW15b} & 			  $1\over
      \eps$ & $1-\frac 1 e -\eps$ & $\max(m k^2,  mn/k)\over \eps$  \\ 
      \problem{$k$-cover} & Here & $4$ & $1- \frac 1 e -\eps$ & $\tilde{O}(n)$  \\  
      \myhline{0.1pt}
\problem{set cover with outliers}\hspace{-3mm}
& Here & $4$ & $(1+\eps)\log \frac 1 {\lambda}$ & $\tilde{O}(n)$  \\
\problem{submodular cover} & \cite{nips15} & $\Omega(n^{1\over 6})$ & $\Omega(n^{1\over 6})$ &  $\tilde{O}(mn)$\\   
\problem{submodular cover} & \cite{MZK16} & $O({\log n\log m\over \epsilon})$ & $(1+\eps)\log \frac 1 {\lambda}$ &  $\tilde{O}(mn)$\\   
      \myhline{0.1pt}
     \problem{dominating set} & Here & $4 $& $(1+\eps)\log \frac 1 {\lambda}$& $\tilde{O}(n)$    \\
      \hline	
    \end{tabular}
  \end{center}
\end{table*}

Secondly, not requiring value oracle access to the coverage function
makes our algorithms and techniques applicable to related
problems such as \problem{dominating set} with applications 
to influence maximization in social networks (see~\cite{nips13} for application). 
Indeed we give the first distributed algorithm for \problem{dominating set} that does not need
to load all edges of a node onto a single machine (Section~\ref{sec:DominatingSet}).
This is crucial for handling graphs with nodes of very high degree.
Thirdly, we show that our algorithm can be implemented in both MapReduce
and RAM models, and furthermore, present extensions of our distributed 
algorithm to a number of variants of weighted coverage problems (Section~\ref{sec:weightedVariants}).

Last but not least, we demonstrate the power of our techniques via an
extensive empirical study on a variety of applications and publicly
available datasets (Section~\ref{sec:empirical}).  We observe that
sketches that are a factor $30$--$600$ smaller than the input suffice
for solving \problem{$k$-cover} with quality (almost) matching that of
the state-of-the-art single-machine algorithm; e.g., for a medium-size
dataset, we can obtain $99.6\%$ of the quality of the single-machine
stochastic-greedy algorithm using only $3\%$ of the input data.  Some
of the instances we examine in this paper are an order of magnitude
larger than the ones studied in prior work~\cite{nips13}.  Finally, we
show an application of our algorithm in large-scale feature selection
by formalizing it as a coverage problem where we aim to choose a
subset of features that {\em cover} as many {\em pairs of samples} as
possible. In doing so, we take advantage of the fact that the space
complexity of our algorithm is independent of the number of elements
in the instance, and we can solve instances of coverage problem with
very large sets.

\begin{table*}
	\caption{Comparison of results for the RAM model.}
	\label{tab:RAM}
\begin{center}
	\begin{tabular}{cccc} 		
		\hline
		Problem  & Credit & Approximation & Runtime \\ 
		\hline 
		 \problem{$k$-cover} & \cite{badanidiyuru2014fast,mirzasoleiman2014lazier} & $1- \frac 1
                 e -\eps$ & $\tilde{O}(nm)$  \\ 
		\problem{$k$-cover} & Here & $1- \frac 1  e -\eps$ & $\tilde{O}(n)$  \\ 
      \myhline{0.1pt}
		\problem{set cover with outliers} & Here & $(1+\epsilon)\log\frac1\lambda$ & $\tilde{O}(n)$  \\ 
		\hline	
	\end{tabular}
\end{center}
\end{table*}


\paragraph{Further Related Work}
Although \problem{maximum $k$-cover} may be solved using a distributed
algorithm for \problem{submodular maximization}, all the prior work in
this area (have to) assume value oracle access to the submodular function, introducing
a dependence on the size of the sets in the running time of each round
of the algorithms. In this model, 
problem,
\citet{CKT10} present a
$1-\frac1e$-approximation algorithm for \problem{$k$-cover} in
polylogarithmic number of rounds of computation, improvable to $O(\log
n)$ rounds~\cite{BST12,KMVV13}.  Recently randomized core-sets were
used to obtain a constant-approximation $2$-round algorithm for this
problem~\cite{MZ15,BENW15}, where the best known approximation factor
is $0.54$.  
In other recent work, \citet{nips15,MZK16}
give a distributed algorithm for \problem{submodular cover} (a
generalization of \problem{set cover}) in the MapReduce framework, however,
their algorithm runs in superconstant number of rounds. and compared
to the result presented here, they have much larger
space complexity when it's applied to \problem{set cover}.

\paragraph{More Notation}
Coverage problems may also be described via a bipartite graph $G$,
with the two sides corresponding to \sets and
\elements, respectively. 
The edges of $G$ correspond to pairs $(S, i)$ where $i \in S \in \sets$.
For simplicity, we assume that there is no isolated vertex in
\elements.
As is customary, we let $\Gamma(G, V')$ denote the set of neighbors of
vertices $V'$ in $G$.  When applied to a bipartite graph $G$ modeling
a coverage instance, we can write the coverage problem as
$\cov(\solution) = |\Gamma(G, \solution)|$ for any
$\solution \subseteq \sets$.

\section{Distributed Algorithms} \label{sec:MapReduce}
In this section we present distributed algorithms for \problem{$k$-cover} and
\problem{set cover with $\lambda$ outliers}. 
We aim to develop algorithms that only need $\tilde{O}(n)$ space per
machine.  As a first attempt, if we want to apply the distributed
submodular optimization results to our problems
(e.g., \algoname{DistGreedy}~\cite{nips13} or composable core-set
algorithm~\cite{MZ15,BENW15}), the underlying algorithms would
distribute {\em subsets} across machines. The main issue with such an
approach is that sending whole subsets does not scale well for
large subsets.  A natural way to deal with the issue of large subsets
is to subsample elements while sending those sets around, and a
natural sampling technique would be {\em uniform sampling}.  We
first rule out applicability of such simple sampling schemes for this
problem. In particular, we present a hardness example for which the size
(i.e., the number of edges) of the instance on each machine has to be
$\Omega(nk)$ to obtain a bounded approximation guarantee.

\begin{theorem}
Pick arbitrary numbers $n, \beta \geq 1$ and $k\leq n/2$. Let $\Alg$
be an algorithm that samples elements uniformly at random and reports
an arbitrary optimum solution to \problem{$k$-cover} on the sampled
instance. If the number of edges sampled by $\Alg$ does not exceed
$nk/\beta^2$, its approximation factor is at most $\frac
{2}{\beta+1}$.
\end{theorem}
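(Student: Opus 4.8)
The plan is to exhibit a single hard instance on which any such $\Alg$, no matter how it samples within its budget, is forced to return a low-coverage solution. Construct the bipartite instance with two kinds of sets, all using \emph{private} (disjoint) elements: $k$ \textbf{good} sets $G_1,\dots,G_k$, pairwise disjoint, each of size $s$; and $n-k$ \textbf{decoy} sets, each a fresh block of $b:=\frac{2s}{\beta+1}$ private elements. Since $\beta\ge 1$ gives $b\le s$, the good sets are the largest, so the optimum of \problem{$k$-cover} is $\{G_1,\dots,G_k\}$ with $\cov=ks=:\opt$, while \emph{any} pure-decoy solution (any $k$ decoys) has real coverage exactly $kb=\frac{2}{\beta+1}\opt$. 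The whole point will be that, after uniform sampling, a pure-decoy solution is \emph{indistinguishable from (indeed, tied with or better than) the good solution} on the sampled instance, so that the ``arbitrary optimum'' returned by $\Alg$ may be taken to be pure-decoy.

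Next I would pin down the sampling rate. Because giving $\Alg$ fewer edges only decreases the information it has, I may assume without loss of generality that $\Alg$ samples up to its budget, retaining each incidence independently with probability $p$ where $pE=nk/\beta^2$ and $E=ks+(n-k)b$ is the total number of incidences. The budget constraint then forces $p$ to be small; in particular the expected number of good incidences retained, $pks$, and the expected size $pb$ of each sampled decoy, are both driven down, and this is precisely the mechanism linking the threshold $nk/\beta^2$ to the approximation bound $\frac{2}{\beta+1}=b/s$. The sampled coverage of the good solution is distributed as $\mathrm{Bin}(ks,p)$, each sampled good set $G_i$ as an independent $\mathrm{Bin}(s,p)$, and each decoy independently as $\mathrm{Bin}(b,p)$.

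The crux is an extremal comparison: I must show that, with the stated probability, at least $k$ of the $n-k$ decoys attain a sampled size no smaller than that of \emph{every} good set. On that event, replacing each $G_i$ by one of these top decoys never decreases the sampled coverage, so a pure-decoy $k$-subset is (also) an optimal solution on the sampled instance; since $\Alg$ reports an arbitrary optimum, it may report this one, yielding real coverage $kb=\frac{2}{\beta+1}\opt$ and hence approximation factor at most $\frac{2}{\beta+1}$. The intuition is that the budget starves the good sets (their sampled sizes are typically $0$ or $1$), while the sheer number of private decoys means enough of them survive sampling that their top $k$ dominate the few incidences captured from the good sets; this is a first/second-moment estimate upper-bounding the maximum good-set sampled size and lower-bounding the count of decoys reaching that level.

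The main obstacle is making this extremal step hold \emph{uniformly across all admissible} $(n,\beta,k)$ with $\beta\ge1$, $k\le n/2$. When $pks\gtrsim 1$ the good solution's sampled size concentrates and one argues by a Chernoff upper tail plus a counting lower bound on surviving decoys; when $pks\lesssim 1$ one instead needs anti-concentration (the good sets are $0$ with constant probability) together with a lower bound on how many decoys are hit, and the two regimes must be stitched together. I also expect to need separate, easy treatment of boundary cases — e.g.\ when $nk/\beta^2<k$ so that fewer than $k$ edges are sampled and $\Alg$ has essentially no information, in which case the bound is immediate. Getting the decoy-survival count and the good-set tail to line up with the exact ratio $\frac{2}{\beta+1}$ in every regime is where the real work lies; the construction and the conclusion are otherwise routine.
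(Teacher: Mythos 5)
Your construction is genuinely different from the paper's, and it has a gap that I believe is fatal rather than merely technical. With all elements private and degree one, fix any $s$: the algorithm that spends its full budget samples each element with probability $p$ satisfying $pE=nk/\beta^2$ where $E=s\bigl(k+\tfrac{2(n-k)}{\beta+1}\bigr)$, so the expected sampled size of a good set is $ps=\frac{nk/\beta^2}{k+2(n-k)/(\beta+1)}$ --- a quantity determined by $n,k,\beta$ alone, with no freedom left in $s$. For, say, $k=n/2$ and $\beta=2$ this gives $ps=\tfrac{3n}{20}$ while each decoy has mean $pb=\tfrac{2}{\beta+1}ps=\tfrac{n}{10}$; both binomials concentrate, so with overwhelming probability \emph{every} decoy's sampled size falls strictly below \emph{every} good set's, the optimum of the sampled instance is uniquely $\{G_1,\dots,G_k\}$, and the algorithm returns a near-optimal solution. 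Your key extremal event (at least $k$ decoys tying or beating the largest sampled good set) therefore fails throughout the regime where $ps\gg\log n$, which includes $k=\Theta(n)$ with constant $\beta>1$; no choice of $s$ repairs this, and the two-regime ``stitching'' you flag as the remaining work cannot succeed because one of the regimes simply is not hard for your instance.

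The paper sidesteps this by abandoning disjointness: it gives every set a common core of $n$ ``normal'' elements of degree $n$ (which soak up the edge budget, forcing the element-sampling rate down to $k/(\beta^2 n)$), and gives each of the $k$ bonus sets only $\beta n/k$ private, degree-one elements. A bonus set is then \emph{exactly} tied with a normal set on the sample unless at least one of its private elements survives, so no anti-concentration or maximum-versus-count comparison is needed --- a one-line expectation count shows only about $k/\beta$ bonus sets get ``exposed,'' and the adversarial tie-breaking yields true coverage $n+\frac{k}{\beta}\cdot\frac{\beta n}{k}=2n$ against $\opt=(\beta+1)n$. If you want to salvage your approach, you need this kind of exact-tie mechanism; size comparisons between independently subsampled disjoint sets will not produce the required indistinguishability once $k$ is large.
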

\begin{proof}
  Consider the following example with $k$ {\em bonus sets} and $n-k$
  normal sets. Moreover, we have $\beta n$ {\em special elements} and
  $n$ normal ones. Each set has edges to all normal elements, and each
  bonus set has edges to $\beta n/k$ unique bonus elements. Notice
  that the optimum \problem{$k$-cover} solution picks all the $k$
  bonus sets, and covers all the $(\beta+1)n$ elements.
	
  Note that each normal element has $n$ edges.  Since \Alg\ samples at
  most $nk/\beta^2$ edges, no more than $k/\beta^2$ normal elements in
  expectation make it to the sample. In other words, each element is
  sampled with probability at most $\frac{k}{\beta^2n}$. Therefore,
  $\Alg$ samples at most $\frac{k}{\beta^2n}\times \beta n = n/\beta$
  bonus elements, in expectation.
	
  Indeed, if $\Alg$ do not pick any bonus elements corresponding to a
  bonus set $S$, in the sampled graph the set $S$ covers the same
  elements as any normal set does.  Thus $\Alg$ might pick a normal
  set instead of $S$ in an arbitrary optimum solution on the sampled
  graph. Notice that $\Alg$ samples at most $n/\beta$ bonus elements
  in expectation, which corresponds to no more than $n/\beta$ distinct
  bonus sets, in expectation. Hence there is an optimum solution on
  the sampled graph with $n/\beta$ bonus sets and $n-n/\beta$ normal
  sets in expectation. The expected total number of elements in this
  solution is $n + \frac{n}{\beta}\times \beta n =2n$.
\end{proof}

This observation suggests that any distributed algorithm should employ
a more nuanced sampling (sketching) technique. To this end, we invoke
a recent technique of ours~\cite{ours}\footnote{In a recent
work~\cite{ours}, we study streaming algorithms for coverage functions
and in particular show that an $\alpha$ approximate solution to
\problem{$k$-cover} on a sketch with the above properties is an
$\alpha-\epsilon$ approximate solution on the actual input, with
probability $1-e^{-\delta}$. This recent work is included as
supplementary material and will be made available online.}:
if we can develop a distributed algorithm with $\tilde{O}(n)$
space that outputs a sketch (denoted by $H_{\leq n}(k,\eps,\delta'')$,
or simply $H_{\leq n}$) satisfying {\em three special properties}, we
can prove tight approximation guarantees for the following algorithm:
solve the problem by running a greedy algorithm on the sketch.

\begin{algorithm}
  \textbf{Input:} Input graph $G$ and parameters $k$, $\eps\in (0,1]$, $\delta''$.\\
  \textbf{Output:} Solution to the coverage problem.
\def\STATE{\Statex\hspace{-3.8ex}}
  \begin{algorithmic}[1]
    \STATE Let $h:\elements\mapsto[0, 1]$ be a uniform, independent hash function.

    \STATE\textbf{Round 1:} Send the edges of each element to a
    distinct machine. Let $\tilde{n} = \frac{24n\delta\log(1/\eps)\log
      n}{(1-\eps)\eps^3}$. For each element $v$, if $h(v)\leq \frac
    {2\tilde{n}} m$, the machine corresponding
    to $v$ sends $h(v)$ and its degree to machine one;
    it does nothing otherwise.
    \STATE\textbf{Round 2:} Machine one iteratively selects
    elements with the smallest $h$ until the sum of the degrees of
    the selected vertices reaches $\tilde n$. Then it
    informs the machines corresponding to selected elements.
    \STATE\textbf{Round 3:} For each selected element $v$, if the
    degree of $v$ is less than $\Delta = \frac{n\log(1/\eps)}{\eps k}$, machine
    $v$ sends all its edges to machine one. Otherwise, it 
    sends $\Delta$ arbitrary edges to machine one.
    \STATE\textbf{Round 4:} Machine one receives the
    sketch  $H_{\leq n}$ and solves the coverage problem on it by applying a greedy algorithm. 
  \end{algorithmic}
  
  \caption{Distributed algorithm for \problem{$k$-cover}} 
  \label{Alg:MR}
\end{algorithm}

Here we develop Algorithm~\ref{Alg:MR}, a four-round distributed
algorithm\footnote{Number of rounds were not optimized due to
readability.}, and prove the main result of this section, by showing
that the output of this algorithm satisfies those three properties
with high probability in a distributed setting using only
$\tilde{O}(n)$ space. More formally, we prove the following.

\begin{theorem}\label{thm:mr:kcover} 
  With probability $1-\frac 2 n$, Algorithm~\ref{Alg:MR} outputs a
  $(1-\frac 1 e -\epsilon)$-approximate solution
  to \problem{$k$-cover}, and no machine uses more than $\tilde{O}(n)$
  space in this algorithm.
\end{theorem}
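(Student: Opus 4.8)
The plan is to reduce the statement to the sketch guarantee of~\cite{ours} quoted in the footnote above: it suffices to certify that the object $H_{\leq n}$ assembled by Algorithm~\ref{Alg:MR} is a legitimate instance of that sketch, i.e.\ that it enjoys the three special properties. Once this is done, the greedy routine invoked in Round~4 returns a $(1-\tfrac1e)$-approximate solution to \problem{$k$-cover} \emph{on the sketch}, because the coverage function is monotone submodular and the classical greedy analysis applies; the transfer lemma of~\cite{ours} then promotes this to a $(1-\tfrac1e-\eps)$-approximate solution on the original graph $G$, failing only with probability $e^{-\delta}\le\tfrac1n$ once we set $\delta=\Theta(\log n)$ (which keeps $\tilde n=\tilde O(n)$). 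Thus essentially all the remaining work is to show that the four rounds faithfully reproduce the prescribed sketch while keeping every machine's load at $\tilde O(n)$.

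First I would recall the sketch that~\cite{ours} prescribes and line it up with the algorithm. Order the elements by their hash values and retain the shortest prefix $v_1,\dots,v_t$, in increasing hash order, whose cumulative degree first reaches $\tilde n$, storing for each retained $v$ all of its incident edges, except that an element of degree above $\Delta$ is truncated to $\Delta$ arbitrary incident edges. Rounds~1--3 implement exactly this object: Round~1 ships the pairs $(h(v),\deg(v))$ for the candidate elements, Round~2 performs the degree-budgeted prefix selection on machine one, and Round~3 gathers the (truncated) edge lists. The only way this implementation can diverge from the prescribed sketch is if the degree-$\tilde n$ prefix contains an element $v$ with $h(v)>\tfrac{2\tilde n}{m}$, which the Round~1 filter would have discarded before machine one ever saw it.

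The key probabilistic claim is therefore that the degree mass carried by elements surviving the Round~1 filter already exceeds $\tilde n$. Writing $D(\theta)=\sum_{v:\,h(v)\le\theta}\deg(v)$ as a sum of independent terms $\deg(v)\cdot\mathbf{1}[h(v)\le\theta]$, and using that the absence of isolated elements forces $|E|=\sum_v\deg(v)\ge m$, one gets $\Ex[D(\tfrac{2\tilde n}{m})]=\tfrac{2\tilde n}{m}|E|\ge 2\tilde n$, a clean factor-$2$ cushion above the target $\tilde n$. I would then apply a Bernstein-type lower-tail estimate. The point requiring care---and what I expect to be the main obstacle---is that the summands are heavy-tailed (a single element may lie in up to $n$ sets), so the best variance proxy is only $\sigma^2\le(\max_v\deg(v))\cdot\Ex[D]\le n\,\Ex[D]$; feeding this into Bernstein gives a tail of order $\exp(-\Theta(\tilde n/n))$, and since $\tilde n/n=\Omega(\log n)$ by the definition of $\tilde n$, this is at most $\tfrac1n$. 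It is precisely the degree cap $\Delta$ already imposed in Round~3 that keeps heavy elements under control and is, I expect, exactly what the corresponding sketch property demands. Equivalently, the prefix threshold $\theta^\ast$ satisfies $\theta^\ast\le\tfrac{2\tilde n}{m}$ with probability $\ge 1-\tfrac1n$, so machine one sees every prefix element and reconstructs the sketch faithfully.

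Finally, for the resource bound, the retained elements have total true degree at most $\tilde n+\max_v\deg(v)\le\tilde n+n=\tilde O(n)$, so machine one receives at most $\tilde O(n)$ edges in Round~3 (truncation only decreases this count); the Round~1--2 traffic consists of $O(\tilde n)=\tilde O(n)$ pairs $(h(v),\deg(v))$, concentrated via a Chernoff bound on the number of surviving candidates, and each element machine handles only a single element's edge list of size at most $n$. Hence no machine exceeds $\tilde O(n)$ space. A union bound over the two bad events---the prefix escaping the Round~1 filter, and the transfer lemma of~\cite{ours} failing---then yields the claimed overall success probability $1-\tfrac2n$.
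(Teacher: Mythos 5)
Your proposal is correct and follows the same skeleton as the paper's proof: certify that Rounds 1--3 faithfully reconstruct the prescribed sketch $H_{\leq n}$ (the only danger being that the Round-1 hash filter discards a prefix element), invoke the transfer lemma of~\cite{ours} for the approximation guarantee, account for per-machine load, and union-bound the two failure events. The one place you diverge is the key concentration step. You lower-bound the \emph{degree mass} $D(2\tilde{n}/m)=\sum_{v:h(v)\le 2\tilde{n}/m}\deg(v)$ via Bernstein, which forces you to wrestle with heavy-tailed summands ($\deg(v)$ can be as large as $n$) and yields only a tail of $\exp(-\Theta(\tilde{n}/n))$ --- adequate here since $\tilde{n}/n=\Omega(\log n)$, but it makes the degree structure load-bearing. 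The paper instead counts \emph{elements}: since there are no isolated elements, every element contributes degree at least $1$, so it suffices that at least $\tilde{n}$ elements have hash value below $2\tilde{n}/m$; this is a plain Chernoff bound on a sum of i.i.d.\ Bernoulli$(2\tilde{n}/m)$ variables with mean $2\tilde{n}$, giving the much stronger tail $\exp(-\Theta(\tilde{n}))$ with no dependence on the degree distribution. The same counting argument simultaneously delivers the upper bound of $3\tilde{n}$ surviving candidates that controls machine one's Round-2 load, which you handle as a separate Chernoff application. So both routes work; the paper's is more elementary and tighter, while yours is self-contained in not needing the ``prefix has at most $\tilde{n}$ elements'' observation. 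Note also that the degree cap $\Delta$ plays no role in either concentration argument (it is one of the three sketch properties consumed by the transfer lemma, not what tames the tail), so your remark that it is ``exactly what keeps heavy elements under control'' here is a slight misattribution, though harmless.
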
 


The proof has two ingredients.  First of all, we show that the sketch
$H_{\leq n}$ computed in this algorithm satisfies the following three
properties: Given parameters, $\epsilon$ and $\delta''$, (1) elements
are sampled uniformly at random, (2) the degree of each element is
upper bounded by $\frac{n\log(1/\eps)}{\eps k}$, and (3) the total
number of edges is at least $\frac{24 n\delta\log(1/\eps)\log
n}{(1-\eps)\eps^3}$, where $\delta = \delta'' \log \log_{1-\eps}
m$.
Secondly, we need to show that the algorithm  uses $\tilde{O}(n)$ space per machine.
The following lemma summarizes properties of the algorithm that pave
the way for the proof of the theorem.


\begin{lemma}\label{lm:cher}
  Given are a graph $G(\sets\union\elements,E)$ along with $k$,
  $\eps\in (0,1]$ and $\delta''\in (0,1]$.  Then with probability
  $1-1/n^2$, $ H_{\leq n}$ 
\vspace{-2ex}
  \begin{packed_item}  
  \item 
    no element with hash value exceeding
    $\frac {2\tilde{n}} m$,
    and 
  \item 
     at most $3\tilde{n}$ edges with hash value
    exactly $\frac {2\tilde{n}} m$,
  \end{packed_item}\vspace{-2ex} 
  where $\tilde{n} = \frac{24n\delta\log(1/\eps)\log
    n}{(1-\eps)\eps^3}$.
\end{lemma}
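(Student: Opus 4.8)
The plan is to derive both properties from Chernoff bounds applied to the hash-based sampling of Round~1. Writing $t=\frac{2\tilde n}{m}$ for the Round~1 threshold, an element $v$ is forwarded exactly when $h(v)\le t$; since $h$ is a uniform independent hash into $[0,1]$, the indicators $Y_v=\mathbf{1}[h(v)\le t]$ are independent Bernoulli variables with $\Pr[Y_v=1]=t$, and everything reduces to concentration of weighted and unweighted sums of the $Y_v$. For the first property (no element of $H_{\le n}$ has hash exceeding $\frac{2\tilde n}{m}$), note that Round~2 selects elements in increasing order of $h$ until their cumulative degree reaches $\tilde n$; hence it suffices to show the below-threshold elements already carry enough degree, i.e.\ $\sum_{v:\,h(v)\le t}\deg(v)\ge\tilde n$, for then the selection terminates before crossing $t$ and every selected element satisfies $h(v)\le t$.

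The key technical move, which is also where the main obstacle lies, is that a direct Chernoff bound on $\sum_v\deg(v)Y_v$ is useless because the degrees are wildly heterogeneous (as large as $n$), so the increments are unbounded and no $n^{-\Omega(1)}$ tail follows. Since I only need a one-sided (lower) bound, I would instead truncate each degree at $\Delta=\frac{n\log(1/\eps)}{\eps k}$ and work with $X=\sum_v\min(\deg(v),\Delta)\,Y_v\le\sum_{v:h(v)\le t}\deg(v)$. Because there are no isolated elements, each truncated degree is at least $1$, so $\E[X]=t\sum_v\min(\deg(v),\Delta)\ge t\,m=2\tilde n$, and $X/\Delta$ is a sum of independent $[0,1]$ variables. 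A standard Chernoff lower tail then gives $\Pr[X<\tfrac12\E[X]]\le\exp(-\E[X]/(8\Delta))\le\exp(-\tilde n/(4\Delta))$, and substituting the definitions yields $\tilde n/\Delta=\frac{24\delta k\log n}{(1-\eps)\eps^2}\ge 24\log n$ (using $\delta,k\ge 1$ and $\eps\in(0,1]$), so this failure probability is at most $n^{-6}\le\frac1{2n^2}$. As the true degrees dominate the truncated ones, $X\ge\tilde n$ forces $\sum_{v:h(v)\le t}\deg(v)\ge\tilde n$, establishing the first property.

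For the second property I would control the number $N=\sum_v Y_v=|\{v:h(v)\le t\}|$ of elements forwarded in Round~1, which is $\mathrm{Binomial}(m,t)$ with $\E[N]=mt=2\tilde n$; a Chernoff upper tail gives $\Pr[N>3\tilde n]=\Pr[N>\tfrac32\E[N]]\le\exp(-\E[N]/12)=\exp(-\tilde n/6)$, and since $\tilde n=\Omega(n\log n)$ under the standing assumptions this is at most $\frac1{2n^2}$. A union bound over the two failure events yields the claimed $1-\frac1{n^2}$ success probability. Finally, I would observe that because Round~2 stops as soon as the cumulative degree reaches $\tilde n$ and Round~3 caps every element's contribution at $\Delta\le\tilde n$, the selected elements contribute at most $\tilde n+\Delta\le 2\tilde n\le 3\tilde n$ edges to $H_{\le n}$ \emph{deterministically}, so the edge/space bound needs no further probabilistic argument. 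The one subtle point to get right is the parameter bookkeeping: $\tilde n$ is calibrated precisely so that $\tilde n/\Delta=\Theta(\delta k\log n/\eps^2)$ is large enough to push the lower-tail failure below $n^{-2}$, which is exactly why the truncation-at-$\Delta$ trick succeeds where the naive bound fails.
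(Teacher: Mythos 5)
Your proposal is correct and follows essentially the same route as the paper: both bullets are obtained from a Chernoff bound on the independent indicators $\mathbf{1}[h(v)\le 2\tilde n/m]$ (whose sum has mean $2\tilde n$), with the first bullet reduced to showing that the below-threshold elements already carry at least $\tilde n$ edges so that the greedy selection in Round~2 never crosses the threshold. The one difference is that for that first bullet you run a truncated, degree-weighted Chernoff bound with truncation at $\Delta$, whereas the paper simply uses the lower tail of the element \emph{count} ($X\ge\tilde n$) together with the standing assumption that every element has degree at least one, so that $\tilde n$ elements already supply $\tilde n$ edges --- a shorter path to the same conclusion, making your truncation machinery sound but unnecessary.
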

\vspace{-4ex}

\begin{proof}
    
    Note that $H_{\leq n}$ requires to have only $\tilde{n}$ edges. Clearly to satisfy this
    it is sufficient to have $\tilde{n}$ elements.
  In the rest of the proof we show that, with probability $1-1/n$, the
  number of elements with hash value less than $\frac {2\tilde{n}} m$
  is within the range $[\tilde{n}, 3\tilde{n}]$.  The lower bound
  together with the fact that $H$ contains at most $\tilde{n}$
  elements gives us the first part of the theorem. The upper bound
  directly proves the second part of the theorem.

  For every element $v\in\elements$, let $X_v$ be the binary random variable
  indicating whether $h(v) < \frac {2\tilde{n}} m$,
  and let $X=\sum_{v \in \elements} X_v$ denote the number of
  elements with hash value less than $\frac {2\tilde{n}} m$. The
  Chernoff bound gives
  \vspace{-2ex}
  \begin{align}\label{eq:cher0}
    \Pr\Big( |X-\Ex[X]|\geq \frac 1 2
    \Ex[X] \Big) &\leq 
    2 \exp\Bigg(-\frac{\frac 1 4 \Ex[X]}{3}\Bigg)\nonumber\\&=2
    \exp\Big(-\frac{\Ex[X]}{12}\Big).
  \end{align}
  \vspace{-3ex}

  Remark that $h$ is a uniform mapping to $[0,1]$.
  Thus, $\Pr[h(v)\leq \frac {2\tilde{n}} m] = \frac {2\tilde{n}} m$ for any
  element $v$, so
  we have
  \vspace{-1ex}
  \begin{align}\label{eq:E[X]}
    \Ex[X] = \sum_{v \in \elements} \Ex[X_v] 
    = \sum_{v \in \elements} \frac {2\tilde{n}} m = 2\tilde{n}.
  \end{align}
  Putting \eqref{eq:cher0} and~\eqref{eq:E[X]} together gives
  us
  \begin{align*}
    \Pr\Big( |X-2\tilde{n}|\geq \tilde{n} \Big) &
    \leq 2 \exp\Big(-\frac{\tilde{n}}{6}\Big) \\
    &
= 2\exp\Big(-\frac{4 n\delta\log(1/\eps)\log n}{(1-\eps)\eps^3}\Big) \\
    &
\leq 2\exp\Big(-2\log n-1 \Big)\\
    &
<\exp\Big(-2\log n \Big)= \frac 1 {n^2}. 
  \end{align*} 
  Thus with probability $1-\frac 1 {n^2}$ we have
  $\tilde{n} \leq X\leq 3\tilde{n}$. 
 \end{proof}

\begin{proofof}{Theorem~\ref{thm:mr:kcover}}
We first show that Algorithm~\ref{Alg:MR} uses $\tilde{O}(n)$ space per machine.
  Next we prove that the algorithm constructs by Round~4 a sketch
  satisfying the desirable three properties mentioned above.  As a
  result, Lemma~\ref{thm:sketchAlg} guarantees that invoking the
  greedy algorithm in Round~4 produces the promised solution.
	
  The degree of each element is at most $n$, the number of sets; thus,
  the space consumption of each machine in the first and third rounds
  is $\tilde{O}(n)$. In the second round
  machine number 1 receives $\tilde{O}(1)$ bits from each machine
  independently with probability $\frac{2\tilde{n}}m$. Using the
  second condition of Lemma~\ref{lm:cher}, the number of messages that
  this machine receives is at most $3\tilde{n}$. Therefore, this
  machine uses $\tilde{O}(n)$ space. The number of edges machine one
  receives in the fourth round is at most $\tilde{n}+n = 
  \tilde{O}(n)$.

  By the first condition of Lemma~\ref{lm:cher}, no element in
  $H_{\leq n}$ has hash value more than $\frac{2\tilde{n}}{m}$. Thus
  the machines with no output in the first round do not miss any
  elements of $H_{\leq n}$. Then the set of elements selected by
  machine one in round two is the same as in $H_{\leq n}$. Therefore,
  what machine one receives in the fourth round is $H_{\leq n}$.
  Discussion at the beginning of the proof finishes the argument.
\end{proofof}


While our algorithm for \problem{$k$-cover} runs a greedy algorithm on
the sketch, our algorithm for \problem{set-cover with $\lambda$
outliers} makes logarithmically many guesses on the number of sets in
the solution, constructs $H_{\leq n}$ sketches for each
(simultaneously), and solves the problem on each resulting sketch. 
The proof of the following theorem is deferred to the full version.

\begin{theorem}\label{thm:mr:epscover}
  There exists a four-round distributed algorithm that reports a
  $(1+\eps)\log \frac 1 {\lambda}$-approximate solution
  to \problem{set cover with $\lambda$ outliers}, with probability
  $1-\frac 2 n$. Moreover, each machine uses $\tilde{O}(n)$ space in
  the algorithm.
\end{theorem}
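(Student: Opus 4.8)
The plan is to reduce \problem{set cover with $\lambda$ outliers} to a small batch of budgeted-coverage subproblems of the kind already handled by Algorithm~\ref{Alg:MR}, by guessing the size of the optimum. Let $k^\star$ be the minimum number of sets covering a $(1-\lambda)$-fraction of \elements. First I would guess $k^\star$ among the geometrically spaced values $k\in\{1,\lceil 1+\eps\rceil,\lceil(1+\eps)^2\rceil,\dots,n\}$, a set of only $O(\tfrac{\log n}{\eps})=\tilde O(1)$ candidates, one of which lies within a $(1+\eps)$ factor of $k^\star$. For each candidate $k$ I would run the sketching stage of Algorithm~\ref{Alg:MR} (Rounds~1--3) to produce $H_{\le n}(k,\eps,\delta'')$. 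Crucially, all $\tilde O(1)$ sketches share the single hash function $h$ and differ only in the degree cap $\Delta=\frac{n\log(1/\eps)}{\eps k}$; since the largest cap is the one for $k=1$, each selected element need forward only $\tilde O(n)$ edges, and machine one can then reconstruct every $H_{\le n}(k)$ by truncation. Thus all guesses are produced simultaneously, the procedure still fits in four rounds, and the per-machine load is $\tilde O_\lambda(n)$ (the $\lambda$-dependence entering through the sketch size $\tilde n$ needed to resolve coverage down to the $\lambda$-threshold).

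In Round~4, on each sketch I would run the outlier-greedy rule on machine one: repeatedly add the set covering the most currently-uncovered sampled elements, stopping the moment the covered fraction measured on the sketch reaches the target. The standard partial-cover recurrence $u_{t+1}\le u_t(1-1/k)+\lambda m/k$ makes the residual $u_t-\lambda m$ decay geometrically, so for the guess $k\approx k^\star$ greedy reaches (essentially) a $(1-\lambda)$-fraction after $O(k^\star\log\frac1\lambda)$ steps. I would then output the solution from the smallest guess $k$ for which greedy meets the target within $(1+\eps)k^\star\log\frac1\lambda$ sets.

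For correctness I would invoke the three sketch properties established through Lemma~\ref{lm:cher} together with the sketch-transfer guarantee of~\cite{ours} (Lemma~\ref{thm:sketchAlg}): with probability $1-1/n^2$ per sketch, the coverage of every subfamily of size up to $(1+\eps)k^\star\log\frac1\lambda$ is preserved within a $(1\pm\eps)$ factor, so the fraction covered on $H_{\le n}(k)$ faithfully reflects the fraction covered on $G$. A union bound over the $\tilde O(1)$ guesses keeps the total sketch-failure probability below $1/n$, and combining it with the single construction-failure event of Lemma~\ref{lm:cher} yields the claimed $1-\frac2n$. The round count and the $\tilde O_\lambda(n)$ space then follow exactly as in the proof of Theorem~\ref{thm:mr:kcover}, scaled by the $\tilde O(1)$ number of guesses.

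\textbf{The main obstacle} I anticipate is controlling the two sources of error jointly so that they both collapse into the single $(1+\eps)\log\frac1\lambda$ factor. A pure partial-cover greedy against a partial optimum is only a $\Theta(\log m)$-approximation in the worst case (forcing greedy to cover exactly $(1-\lambda)m$ can cost $\Omega(k^\star\log m)$ sets), so the $\log\frac1\lambda$ bound necessarily relies on stopping at the $\lambda$-threshold rather than at the last element. This is precisely where the sketch helps: its sampling floor of order $\eps\lambda m$ truncates the geometric decay at the right place, turning the ``last element'' blow-up into a ``last $\eps\lambda$-fraction'' stopping rule and yielding $t\le k^\star\ln\frac{1-\lambda}{\eps\lambda}$ steps. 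The delicate accounting the full proof must carry out is then to split the error budget so that the bicriteria slack in the covered fraction, the sketch's $(1\pm\eps)$ estimation error near the threshold, and the $(1+\eps)$ spacing of the guesses all fold into one multiplicative $(1+\eps)$, rather than leaving a stray additive $\log\frac1\eps$ term alongside the $\log\frac1\lambda$.
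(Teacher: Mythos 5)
Your proposal is correct and takes essentially the same route as the paper: logarithmically many geometrically spaced guesses of the optimum size, simultaneous construction of one sketch per guess within the same four rounds, greedy on each sketch, and a union bound over the $\tilde O(1)$ sketch-construction failure events, with the $(1+\eps)\log\frac1\lambda$ guarantee delegated to Lemma~\ref{thm:sketchAlg}. The additional detail you supply on the partial-cover recurrence and the joint error accounting is precisely what the paper offloads to that cited lemma from~\cite{ours}.
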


\section{Algorithms for RAM Model} \label{sec:RAM}
In this section we explain how our results apply to the RAM model, as
well.  Recall that in this model, we have random access to the edge
lists, however, each access takes $O(1)$ time.

\begin{algorithm}
	\textbf{Input:} Graph $G(\sets\union\elements,E)$ and numbers $k, \eps\in (0,1], \delta''$.\\
	\textbf{Output:} Sketch $H(V_H, E_H) = H_{\leq n}(k,\eps,\delta'')$.
	
	\begin{algorithmic}[1]
		\STATE $\delta \GETS \delta'' \log \log_{1-\eps} m$
		%
		\STATE  $h:\elements\mapsto[0, 1]$ uniform, independent hash function
		\label{Alg:Hn:l2}
		\STATE $V_H \GETS \sets$ and $E_H\GETS\emptyset$  \COMMENT{Initialize}
		\WHILE {$|E_H| < \frac{24 n\delta\log(1/\eps)\log n}{(1-\eps)\eps^3}$}
		\STATE $v\GETS\arg\min_{v\in\elements\setminus V_H} h(v)$%
		\label{Alg:Hn:l5}
		\STATE $V_H \GETS V_H\union\{v\}$
		\STATE Add $\min(\frac{n\log(1/\eps)}{\eps k}, |\Gamma_G(v)|)$ edges of $v$ to $E_H$
		\label{Alg:Hn:l7}
		\ENDWHILE
		%
	\end{algorithmic}	
	\caption{Abstract construction of the sketch} 
	\label{Alg:Hn}
\end{algorithm}

\begin{theorem}
  There exists an algorithm that, given random access to the edge
  lists of coverage instance $G(\sets\union\elements, E)$, computes
  the sketch $H = H_{\leq n}$ in time $\tilde O(n)$.
\end{theorem}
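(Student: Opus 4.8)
The plan is to simulate the abstract sketch construction of Algorithm~\ref{Alg:Hn} \emph{without ever touching the whole element set}, whose size $m$ may be arbitrarily large compared with $n$. The naive implementation evaluates the hash $h$ on all $m$ elements in order to locate those of smallest hash value, costing $\Omega(m)$; this is precisely what must be avoided, since the promised running time $\tilde{O}(n)$ is independent of $m$. The key observation is that Algorithm~\ref{Alg:Hn} inspects elements in increasing order of the i.i.d.\ uniform hash values $h(v)$, which is distributionally the same as inspecting them in a uniformly random order, and it halts as soon as the accumulated edge count reaches $\tilde{n} = \frac{24 n\delta\log(1/\eps)\log n}{(1-\eps)\eps^3} = \tilde{O}(n)$. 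Because every element has degree at least one (there are no isolated elements in $\elements$), each iteration of the loop adds at least one edge, so the loop processes at most $\tilde{n}$ elements before stopping. Hence the abstract sketch is fully determined by just the first $\tilde{O}(n)$ elements of a random permutation of $\elements$, together with up to $\Delta = \frac{n\log(1/\eps)}{\eps k}$ incident edges of each.

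First I would replace the hash-based ordering by sampling elements uniformly at random \emph{without replacement}: repeatedly pick a uniformly random not-yet-chosen element $v\in\elements$, read $\min(\Delta, |\Gamma_G(v)|)$ of its incident edges through the random-access primitive, append them to $E_H$, and stop once $|E_H|\ge\tilde{n}$. Since processing elements in increasing hash order is identical in distribution to processing them in uniformly random order, the sketch produced this way is an exact sample from the same distribution as $H_{\leq n}$; in particular it satisfies the three defining properties (elements sampled uniformly at random, a per-element degree cap of $\Delta$, and at least $\tilde{n}$ edges). The validity of the resulting object as the sketch $H_{\leq n}$ then follows from the guarantees already established (Lemma~\ref{lm:cher} and the sketch properties imported from~\cite{ours}), so nothing new must be proved about correctness beyond the distributional equivalence just described.

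Second I would bound the running time. The loop runs for at most $\tilde{n}=\tilde{O}(n)$ iterations, and the total number of edges read is at most $\tilde{n}+\Delta=\tilde{O}(n)$ (the final element can overshoot the threshold by at most $\Delta$, and $\Delta=\tilde{O}(n)$ whenever $k\ge 1$); each edge read is an $O(1)$-time lookup. The only remaining cost is generating the without-replacement sampling sequence, and this is the main obstacle: implemented naively via an array-based Fisher--Yates shuffle it would require $\Omega(m)$ time and space merely to initialize the permutation array, destroying the bound. I would instead maintain the partial shuffle lazily in a hash table that records only the $\tilde{O}(n)$ array positions actually swapped (equivalently, draw the gaps between successive sampled indices to jump directly to them), so that each draw costs $\tilde{O}(1)$ expected time and the total sampling overhead is $\tilde{O}(n)$. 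Adding the two contributions yields the claimed $\tilde{O}(n)$ running time. The delicate point to get right is therefore not the edge accounting but the sampling primitive—ensuring it reproduces the uniform-random-order distribution exactly while remaining sublinear in $m$.
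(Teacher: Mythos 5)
Your proposal is correct and follows essentially the same route as the paper's proof: simulate Algorithm~\ref{Alg:Hn} by observing that selecting elements in increasing hash order is distributionally identical to drawing uniformly random elements without replacement, that the loop touches only $\tilde{O}(n)$ elements and edges before terminating, and that the lazy hash function need never be materialized over all of $\elements$. Your extra care about implementing the without-replacement sampling in $\tilde{O}(1)$ time per draw is a more explicit version of the paper's remark that one only needs to keep a list of already selected vertices to avoid repetition.
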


\begin{proof}
  We show how Algorithm~\ref{Alg:Hn} can run in the RAM
  model. Since $|E_H| = \tilde O(n)$ at the end, total work done in
  Line~\ref{Alg:Hn:l7} is $\tilde O(n)$.  In Line~\ref{Alg:Hn:l2} we
  do not need to define the hash function explicitly.  When
  Line~\ref{Alg:Hn:l5} seeks the next vertex, it is equivalent to pick
  a random new vertex.  We only need to keep a list of already
  selected vertices to avoid repetition.
\end{proof}

Once the sketch is constructed we can run a sequential algorithm on
the sketch (or sketches) to solve \problem{$k$-cover} 
and \problem{set cover with outliers}.  The proof is almost
identical to those of Theorems~\ref{thm:mr:kcover}
and~\ref{thm:mr:epscover} and is omitted.

\begin{theorem}
  There is an $\tilde O(n)$-time, $1-\frac1e-\epsilon$-approximation
  algorithm in the RAM model for  \problem{$k$-cover}.
\end{theorem}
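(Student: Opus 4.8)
The plan is to decouple sketch construction from solving the problem, exactly as in the proof of Theorem~\ref{thm:mr:kcover}, and to charge each phase $\tilde O(n)$ time separately. First I would invoke the preceding theorem, which already shows that Algorithm~\ref{Alg:Hn} computes the sketch $H = H_{\leq n}(k,\eps,\delta'')$ in time $\tilde O(n)$ using only random access to the edge lists; the subtlety there, as noted, is that the hash function $h$ need not be materialized and that selecting the next-smallest-hash element in Line~\ref{Alg:Hn:l5} reduces to repeatedly drawing a fresh uncovered element, costing $O(1)$ amortized. By Lemma~\ref{lm:cher}, with probability $1 - 1/n^2$ the resulting $H$ satisfies the three desired properties (uniform element sampling, per-element degree at most $\Delta = \frac{n\log(1/\eps)}{\eps k}$, and at least $\tilde n$ edges), and by the edge cap it has only $\tilde O(n)$ edges in total.

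Second, I would run the offline greedy $k$-cover algorithm directly on $H$. Since the three sketch properties hold, Lemma~\ref{thm:sketchAlg} (equivalently the streaming guarantee of~\cite{ours}) certifies that any $(1-\tfrac1e)$-approximate solution to \problem{$k$-cover} on $H$ is a $(1-\tfrac1e-\eps)$-approximate solution on the original instance $G$, with high probability; the standard greedy supplies the $1-\tfrac1e$ factor on $H$. From this point the correctness analysis is identical to the MapReduce case (Theorem~\ref{thm:mr:kcover}), and the only genuinely new content is the cost model.

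The step I expect to require the most care is bounding the running time of greedy on $H$ by $\tilde O(n)$ rather than the naive $\tilde O(nk)$ that $k$ rounds of marginal-gain recomputation would suggest. The point to make is that greedy on a coverage instance admits an incremental implementation: maintain for every set its current marginal gain and keep the sets bucketed by gain, and whenever an element is covered for the first time, decrement the gain of each set incident to it. Because $H$ carries only $\tilde O(n)$ edges, the total number of such decrements---and hence the total work---is $\tilde O(n)$, with bucket maintenance adding at most logarithmic overhead. Combining the two $\tilde O(n)$ phases, and folding the $1 - 1/n^2$ failure probability of Lemma~\ref{lm:cher} into the sketch guarantee, yields the claimed $\tilde O(n)$-time $(1-\tfrac1e-\eps)$-approximation; the same argument applied to the guessing variant of Theorem~\ref{thm:mr:epscover} would give the analogous RAM result for \problem{set cover with outliers}.
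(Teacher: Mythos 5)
Your proposal follows the paper's own (largely omitted) argument exactly: build $H_{\leq n}$ in $\tilde O(n)$ time via Algorithm~\ref{Alg:Hn} in the RAM model, run a sequential greedy on the sketch, and inherit correctness from Lemma~\ref{thm:sketchAlg} just as in Theorem~\ref{thm:mr:kcover}. The only content you add beyond what the paper states is the explicit bucketed, decrement-per-edge implementation showing greedy runs in time linear in the $\tilde O(n)$ edges of the sketch, which is a correct filling-in of a detail the paper leaves implicit.
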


\begin{theorem}
  There is an $\tilde O(n)$-time algorithm in the RAM model that finds
  a $(1+\epsilon)\log\frac1\lambda$-approximate solution
  for \problem{set cover with $\lambda$ outliers}.
\end{theorem}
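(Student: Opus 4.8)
The plan is to reuse the distributed construction of Theorem~\ref{thm:mr:epscover} verbatim, swapping its distributed sketch builder for the RAM builder of Algorithm~\ref{Alg:Hn}, which the preceding theorem constructs in $\tilde O(n)$ time. Since the optimal number of sets $k^\ast$ is not known in advance, I would guess it geometrically: for each candidate $k\in\{1,\lceil(1+\eps)\rceil,\lceil(1+\eps)^2\rceil,\dots\}$ up to $n$ --- only $O(\tfrac{\log n}{\eps})$ values --- I build the sketch $H_{\leq n}(k,\eps,\delta'')$ via Algorithm~\ref{Alg:Hn}. At least one guess lies within a factor $1+\eps$ of $k^\ast$, and this is the guess that will certify the approximation.

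On each sketch I would run the greedy set-cover routine, repeatedly adding the set of largest marginal coverage as measured on the sketch and halting once the sketch's covered elements reach a $1-\lambda$ fraction. Because the sketch samples elements uniformly (property (1)) and the transfer guarantee of~\cite{ours} carries coverage values from the sketch to the true instance up to an additive $\eps$ error, a solution that covers a $(1-\lambda)$ fraction of the sampled elements covers essentially a $(1-\lambda)$ fraction of the true ground set, and greedy cannot be tricked into stopping prematurely. The standard analysis of greedy for partial cover then shows that, on the guess $k\approx k^\ast$, the routine terminates after at most $(1+\eps)\log\tfrac1\lambda\cdot k^\ast$ sets; returning the smallest feasible solution found over all guesses gives the claimed $(1+\eps)\log\tfrac1\lambda$ approximation. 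The amplification $\delta=\delta''\log\log_{1-\eps}m$ built into Algorithm~\ref{Alg:Hn} is exactly what lets a union bound cover all the guesses and all greedy iterations while keeping the total failure probability below $2/n$, matching Theorem~\ref{thm:mr:epscover}.

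For the running time, each of the $O(\tfrac{\log n}{\eps})$ sketches costs $\tilde O(n)$ to build by the previous theorem, and running greedy over a sketch touches only its $\tilde O(n)$ edges; summing over the $\tilde O(1)$ guesses keeps the total at $\tilde O(n)$. The main obstacle I anticipate is not the time bound, which is immediate, but arguing correctness of the guessing scheme \emph{uniformly}: I must ensure the sketch for the relevant guess simultaneously preserves the coverage of every intermediate greedy solution, so that the sketch-based stopping test certifies genuine $(1-\lambda)$-coverage on the true input. This is precisely what the three sketch properties together with the transfer theorem of~\cite{ours} deliver, so the argument follows the distributed proof of Theorem~\ref{thm:mr:epscover} almost line for line; the single new ingredient is the substitution of the RAM sketch construction.
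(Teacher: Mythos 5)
Your proposal matches the paper's intended argument: the paper explicitly omits this proof, stating it is ``almost identical'' to those of Theorems~\ref{thm:mr:kcover} and~\ref{thm:mr:epscover}, and the proof of Theorem~\ref{thm:mr:epscover} does exactly what you describe --- logarithmically many guesses of the solution size, one sketch per guess, greedy on each sketch, and an appeal to the transfer guarantee of Lemma~\ref{thm:sketchAlg} with a union bound over the copies --- with the only change being the substitution of the RAM sketch construction for the distributed one. Your reconstruction is correct and follows essentially the same route.
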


\section{Dominating Set} \label{sec:DominatingSet}
In \problem{dominating set} problem, we are given a graph $G(V,E)$ and we aim to
find the minimum number of vertices $S\subseteq V$ such that 
$S \union \Gamma_G(S) = V$.
We say a set of vertices $S \subseteq V$ is a {\em dominating set with $\lambda$ outliers} if 
$|S \union \Gamma_G(S)| \geq (1-\lambda)|V|$.
A set of vertices $S$ is an $\alpha$-approximate
solution to \problem{dominating set with $\lambda$ outliers} if (1) it is a
dominating set with $\lambda$ outliers, and (2) $|S|$ is
at most $\alpha$ times the size of the smallest dominating set.

The following theorem provides the first distributed algorithm for
\problem{dominating set with $\lambda$ outliers}.


\begin{theorem} \label{thm:Ldominatin}
  There exists a four-round distributed algorithm that reports a
  $(1+\eps)\log \frac 1 {\lambda}$-approximate solution to \problem{dominating
  set with $\lambda$ outliers}, with probability $1-\frac 2 n$, while
  each machine use only $\tilde{O}(n)$ space.
\end{theorem}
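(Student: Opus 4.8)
The plan is to reduce \problem{dominating set with $\lambda$ outliers} on $G(V,E)$ to an instance of \problem{set cover with $\lambda$ outliers}, and then invoke the distributed algorithm of Theorem~\ref{thm:mr:epscover}. The reduction is the textbook one based on closed neighborhoods: take the ground set of elements to be $\elements = V$, and introduce one set $N[u] = \{u\} \union \Gamma_G(u)$ for every vertex $u \in V$, so that the family is $\sets = \{N[u] : u \in V\}$. A subfamily indexed by $S \subseteq V$ covers an element $v$ exactly when $v \in S$ or $v$ is adjacent to some $u \in S$, i.e. when $v \in S \union \Gamma_G(S)$; equivalently $\bigcup_{u\in S} N[u] = S \union \Gamma_G(S)$. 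Hence $S$ is a dominating set (resp.\ a dominating set with $\lambda$ outliers) of $G$ if and only if the corresponding subfamily is a set cover (resp.\ a set cover with $\lambda$ outliers) of this instance, and the two objectives coincide. In particular, a $(1+\eps)\log\frac1\lambda$-approximate set cover with $\lambda$ outliers translates verbatim into a $(1+\eps)\log\frac1\lambda$-approximate dominating set with $\lambda$ outliers, exactly matching the approximation notion defined above.

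Next I would verify that this coverage instance has the right parameters and, more importantly, that it can be built inside the distributed model without ever loading a full high-degree neighborhood onto a single machine. The bipartite graph underlying the instance has both sides equal to $V$, and its edges are precisely the incidences $(u,v)$ with $v \in N[u]$; since $v \in N[u] \iff u \in N[v]$, the edge set is symmetric and has size $|V| + 2|E|$. Crucially, the incidence list of an element $v$---its edges in the coverage bipartite graph---is exactly $v$'s closed neighborhood in $G$, which is the data already local to $v$ in the edge-arrival/MapReduce representation of $G$. Thus every step of Algorithm~\ref{Alg:MR} and its set-cover variant applies unchanged: Round~1 hashes $v$ and sends $v$'s closed neighborhood to the machine handling $v$; Rounds~2--3 subsample elements and cap each element's collected degree at $\Delta = \frac{n\log(1/\eps)}{\eps k}$; Round~4 runs greedy on the resulting sketch $H_{\leq n}$. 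Because $n = |\sets| = |V|$ (and $m = |\elements| = |V|$), this is the same $n$ that appears in Theorem~\ref{thm:mr:epscover}.

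Applying Theorem~\ref{thm:mr:epscover} to this instance then yields a four-round distributed algorithm that, with probability $1 - \frac2n$, returns a subfamily that is a $(1+\eps)\log\frac1\lambda$-approximate set cover with $\lambda$ outliers while each machine uses $\tilde{O}(n)$ space. By the equivalence of the first paragraph, the corresponding vertex set $S$ is the desired approximate dominating set with $\lambda$ outliers, which proves the theorem.

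The main obstacle---and the point that distinguishes this result from prior distributed dominating-set algorithms---is the second paragraph: ensuring the reduction never forces a single machine to hold all edges incident to a high-degree vertex. Here the degree cap $\Delta$ in the sketch is precisely what saves us. Even though a vertex $u$ of very large degree induces a very large set $N[u]$, the sketch retains only $\min(\Delta, |\Gamma_G(v)|)$ incidences of each sampled element $v$, so no machine's load ever grows with the maximum degree of $G$. Verifying that this capped, subsampled coverage instance still satisfies the three sketch properties required by Theorem~\ref{thm:mr:epscover} is inherited directly from the analysis of Algorithm~\ref{Alg:MR} via Lemma~\ref{lm:cher}, and requires no change beyond substituting $n = m = |V|$.
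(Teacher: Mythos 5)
Your proof is correct and follows essentially the same route as the paper: the closed-neighborhood reduction ($\sets = \elements = V$, with $u$ covering $v$ iff $u=v$ or $(u,v)\in E$) followed by an invocation of Theorem~\ref{thm:mr:epscover}. Your additional second and fourth paragraphs, checking that the reduction is implementable distributedly without ever materializing a high-degree closed neighborhood on one machine, make explicit a point the paper only gestures at in the introduction but does not spell out in its proof.
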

\begin{proof}
  We give a reduction from
  \problem{dominating set with $\lambda$ outliers}
  to
  \problem{set cover with $\lambda$ outliers}.
  Let graph $H(V, E)$ be an instance
  of the former. We construct an instance $G(\sets\union\elements,E')$
  of the latter problem.  The $n$
  sets in $\sets$ correspond to the vertices of $H$. Similarly,
  the elements in $\elements$ correspond to the vertices in $H$.
  An edge  $(a, b): a \in \sets, b\in \elements$ appears in $G$
   if $a=b$ or $(a, b) \in E$.

  Any solution $\solution\subseteq\sets$ to the \problem{set-cover}
  instance $G$ corresponds to a subset of $V$ that dominate
  the vertices the corresponding sets cover, hence a one-to-one
  correspondence between the solutions of the two problems.
\end{proof}

The input for \problem{$k$-dominating set} includes a number $k$ in
addition to the graph $G$.  The goal is to select $k$ vertices that
maximize the number of {\em dominated} vertices. Then a subset of
vertices $S$ is an $\alpha$-approximate solution
to \problem{$k$-dominating set} if it covers $\alpha$ times that of
the optimum. 

The following theorem provides the first distributed algorithm for
\problem{$k$-dominating set}.

\begin{theorem} 
	There exists a four-round distributed algorithm that reports a
	$1-\frac1e - \eps$-approximate solution to \problem{$k$-dominating
	set}, with probability $1-\frac 2 n$.
	Moreover, each machine uses only $\tilde{O}(n)$ space in the algorithm.
\end{theorem}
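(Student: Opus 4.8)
The plan is to reduce \problem{$k$-dominating set} to \problem{$k$-cover} using the same gadget that powers Theorem~\ref{thm:Ldominatin}, and then invoke the four-round \problem{$k$-cover} algorithm of Theorem~\ref{thm:mr:kcover} as a black box. Given an instance $H(V,E)$ of \problem{$k$-dominating set}, I would build a coverage instance $G(\sets\union\elements, E')$ whose $n$ sets in \sets and $n$ elements in \elements both correspond to the vertices of $H$, placing an edge $(a,b)$ with $a\in\sets$ and $b\in\elements$ whenever $a=b$ or $(a,b)\in E$. Selecting a vertex $v$ of $H$ into the solution then corresponds to selecting the set in \sets indexed by $v$, and the edge joining the set-copy and element-copy of the same vertex guarantees that a chosen vertex dominates itself. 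Consequently the union covered by any $k$ chosen sets equals exactly the set of vertices dominated by the corresponding $k$ vertices, so the objectives coincide under a one-to-one correspondence between feasible solutions of the two problems.

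The second step is to argue that this transformation is essentially free in the distributed model. The reduction only augments the edge set by the $n$ edges identifying each set-copy with its element-copy, so $G$ has at most $|E|+n$ edges and the degree of each element exceeds that of the corresponding vertex of $H$ by at most one; thus building $G$ neither increases the asymptotic load nor requires an extra round. I would fold the addition of these edges into the Round~1 message passing of Algorithm~\ref{Alg:MR}, so that the whole procedure still runs in four rounds with $\tilde{O}(n)$ space per machine.

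Applying Theorem~\ref{thm:mr:kcover} to $G$ then yields, with probability $1-\frac2n$, a $(1-\frac1e-\eps)$-approximate solution to \problem{$k$-cover} using $\tilde{O}(n)$ space per machine; translating it back along the correspondence above gives a $(1-\frac1e-\eps)$-approximate solution to \problem{$k$-dominating set} with the same probability, space bound, and round count. The one point I would be careful to verify is that the reduction preserves the \emph{approximation ratio} and not merely the optimum value: since the map between feasible solutions is a value-preserving bijection (which in particular identifies the two optima), the value of the returned solution and the value of the optimum are both carried across unchanged, so the ratio transfers exactly. I expect the only mild obstacle to be this bookkeeping together with confirming that introducing the identification edges does not inflate the per-machine space or the number of rounds; neither is deep, and no other step presents a genuine difficulty.
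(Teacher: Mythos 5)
Your proposal is correct and is essentially identical to the paper's own proof: the same reduction from \problem{$k$-dominating set} to \problem{$k$-cover} (sets and elements both indexed by vertices of $H$, with an edge for $a=b$ or $(a,b)\in E$), the same value-preserving correspondence between solutions, and the same black-box invocation of Theorem~\ref{thm:mr:kcover}. Your additional remarks on folding the identification edges into Round~1 and on preserving the approximation ratio are sound bookkeeping that the paper leaves implicit.
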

\begin{proof}
	Similarly to the proof of Theorem~\ref{thm:Ldominatin}, we
	give a reduction from \problem{$k$-dominating set}
	to \problem{$k$-cover}. From an instance $H$ of the former
	problem, we construct an instance $G(\sets\union\elements,E')$ of the latter.
	Once again, each set in $\sets$ corresponds to a
	vertex in $H$, as is each element in $\elements$. We place an
	edge between $a \in \sets$ and $b\in \elements$ if and only if
	$a=b$ or there is an edge between $a$ and $b$ in $H$.
	
	For any set $v\in\sets$, the set of elements covered by $v$ is
	exactly the union of $v$ and all its neighbors in
	$H$. Similarly, for any subset $S\subseteq \sets$, the
	elements that $S$ covers is the union of vertices in $S$ and
	all their neighbors. Therefore any \problem{$k$-cover}
	solution on $G$ corresopnds to a \problem{$k$-dominating set}
	solution with the same coverage, and vice versa.
\end{proof}

\section{Weighted Variants} \label{sec:weightedVariants}
In this section we extend our results to three variants of the
coverage problem.  In \problem{element-weighted $k$-cover}, a weight
$w_v$ is associated with each element $v \in \elements$, and the
objective is to maximize the total weight of covered elements.  
A \problem{fractional $k$-cover} instance has quantity $\alpha_{u,v}
\in [0,1]$ for each $S\in\sets, v\in\elements$, denoting that set $S$
covers $\alpha_{S,v}$ fraction of element $v$.  A solution
$\solution\subseteq \sets$ covers $\max_{S\in\solution} \alpha_{S,v}$
fraction of element $v$. Here the objective is to find a solution
$\solution\subseteq \sets$ of size $k$ that maximizes $\sum_{v \in
  \elements} \max_{S\in \solution} \alpha_{S,v}$.
Finally in \problem{probabilistic $k$-cover}, quantity $\alpha_{S,v}
\in [0,1]$ is provided for each pair of $S \in \sets$ and $v \in
\elements$: set $S$ covers element $v$ with probability
$\alpha_{S,v}$. A solution $\solution\subseteq \sets$ covers $1-
\prod_{S\in \solution}(1- \alpha_{S,v})$ fraction of element $v$. The
objective then is to find a solution $\solution\subseteq \sets$ of
cardinality $k$ that maximizes $\sum_{v \in \elements} \big(1-
\prod_{S\in \solution}(1- \alpha_{S,v})\big)$.

In the first problem, for simplicity we assume that all weights are
integers upper-bounded by a number $U$. Similarly, in the second and
the third problems, we assume that $\alpha_{S,v}$ is a factor of $1/U$
for any $v\in S\in\sets$.

\begin{theorem}\label{thm:mr:W} 
	There exists a four-round distributed algorithm that finds a
        $(1-\frac 1 e -\epsilon)$-approximate solution to
        \problem{element-weighted $k$-cover}, with probability
        $1-\frac 2 n$. Moreover, each machine uses
        $\tilde{O}(n)$ space in this algorithm.
\end{theorem}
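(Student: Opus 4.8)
The plan is to reduce \problem{element-weighted $k$-cover} to ordinary (unweighted) \problem{$k$-cover} and then invoke Theorem~\ref{thm:mr:kcover} essentially as a black box. Given a weighted instance $G(\sets\union\elements,E)$ with integer weights $w_v \le U$, I construct an unweighted instance $G'$ by replacing each element $v\in\elements$ with $w_v$ identical copies, every copy incident to exactly the sets that contained $v$. The number of sets is unchanged (still $n$), while the number of elements grows to $m' = \sum_{v}w_v \le mU$. The coverage of any subfamily $\solution$ in $G'$ equals $\sum_{v\in\Gamma(G,\solution)} w_v$, which is precisely the weighted objective of $\solution$ in $G$; hence optimal solutions correspond, and any $\alpha$-approximate solution to \problem{$k$-cover} on $G'$ is an $\alpha$-approximate solution to \problem{element-weighted $k$-cover} on $G$.

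First I would observe that the space guarantee of Theorem~\ref{thm:mr:kcover} depends on the number of elements only through the factor $\delta = \delta''\log\log_{1-\eps} m$ hidden inside the $\tilde{O}(n)$ bound. Replacing $m$ by $m' \le mU$ turns this into $\log\log(mU)$, which is still polylogarithmic, so running Algorithm~\ref{Alg:MR} on $G'$ yields a $(1-\frac1e-\eps)$-approximation with per-machine load $\tilde{O}(n)$ and success probability $1-\frac2n$. This settles the approximation and space claims simultaneously, provided $G'$ can actually be processed within budget.

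The main obstacle is that $G'$ must never be materialized: it may have $mU$ elements and up to $mU\cdot n$ edges. Instead I would simulate $G'$ implicitly during the four rounds. The machine responsible for a weighted element $v$ treats its $w_v$ copies as points $(v,1),\dots,(v,w_v)$ and assigns each a fresh, independent hash value; only those copies whose hash falls below the threshold $\frac{2\tilde n}{m'}$ are reported to machine one. By the same Chernoff argument as in Lemma~\ref{lm:cher} (now carried out over the $m'$ copies rather than over $m$ elements), the number of sampled copies lies in $[\tilde n,3\tilde n]$ with high probability, and each reported copy contributes at most $\Delta=\frac{n\log(1/\eps)}{\eps k}$ edges, all of which coincide with edges of $v$ and are therefore already available on $v$'s machine. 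Thus the total load stays $\tilde{O}(n)$ even though the underlying instance is much larger.

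Finally I would verify that the sketch produced this way satisfies the three properties used in the proof of Theorem~\ref{thm:mr:kcover}: uniform sampling of the $m'$ copies, a per-element degree bounded by $\frac{n\log(1/\eps)}{\eps k}$, and at least $\tilde n$ edges in total. Since the copies of $v$ are genuine i.i.d.\ sampling targets under the extended hash function, these properties transfer verbatim, so the greedy step in Round~4 returns a $(1-\frac1e-\eps)$-approximate solution on $G'$, which the reduction transports back to the weighted instance on $G$.
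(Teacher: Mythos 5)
Your proposal is correct and follows essentially the same route as the paper: replace each element $v$ with $w_v$ unit copies, note that the sketch size depends only logarithmically on the number of elements so the blowup to $mU$ is harmless, and simulate the sampling of the copies implicitly (the paper phrases this as picking each element with probability proportional to its weight, which is exactly what your per-copy independent hashing achieves). Your write-up simply fills in the Chernoff and three-property details that the paper leaves implicit.
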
 
\begin{proof}
Let's replace an element $v \in \elements$ of weight $w_v$ with $w_v$
copies of $v$ of weight one.  This does not change the coverage of
any solution, however, it may significantly increase the size of the
problem---we can end up with $U m$ elements.  The sketch size, though,
is logarithmic in terms of the number of elements, though, and we can also
sample the new elements implicitly: simply pick each element with
probability proportional to its weight.
\end{proof}

\begin{theorem}\label{thm:mr:F} 
	There exists a four-round distributed algorithm that reports a
        $(1-\frac 1 e -\epsilon)$-approximate solution to
        \problem{fractional $k$-cover}, with probability $1-\frac 2
        n$. Moreover, no machine uses more than $\tilde{O}(n)$ space
        in this algorithm.
\end{theorem}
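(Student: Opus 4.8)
The plan is to reduce \problem{fractional $k$-cover} to ordinary \problem{$k$-cover} by a layering construction and then invoke the four-round algorithm of Theorem~\ref{thm:mr:kcover}, in the same spirit as the weighted reduction of Theorem~\ref{thm:mr:W}. Concretely, I replace each element $v\in\elements$ by $U$ layers $v_1,\dots,v_U$ and place an edge between a set $S\in\sets$ and the layer $v_i$ precisely when $\alpha_{S,v}\geq i/U$. Since every $\alpha_{S,v}$ is a multiple of $1/U$, a solution $\solution$ covers $v_i$ iff $\max_{S\in\solution}\alpha_{S,v}\geq i/U$, so $\solution$ covers exactly $U\cdot\max_{S\in\solution}\alpha_{S,v}$ layers of $v$. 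Summing over $v$, the coverage of $\solution$ in the layered instance equals $U$ times its fractional coverage; the two objectives therefore agree up to the global constant $U$, and hence any $(1-\frac1e-\eps)$-approximate solution to the layered \problem{$k$-cover} instance is a $(1-\frac1e-\eps)$-approximate solution to \problem{fractional $k$-cover}.

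The second step is to build the sketch $H_{\leq n}$ of the (virtual) layered instance without ever materializing its up to $Um$ layers. As in Theorem~\ref{thm:mr:W}, the sketch records only $\tilde{O}(n)$ edges and its size grows only logarithmically in the number of elements, so the extra $\log U$ factor from the $Um$ layers is absorbed into $\tilde{O}(\cdot)$. I keep the machine responsible for an original element $v$ in charge of all of its layers. Having received the edges incident to $v$ together with their values $\alpha_{S,v}$ in Round~1, this machine can, after a single sort of the $\alpha_{S,v}$, return for any threshold $i/U$ both the degree of $v_i$ (the number of sets $S$ with $\alpha_{S,v}\geq i/U$) and the corresponding edge list; thus every per-layer query made by Algorithm~\ref{Alg:MR} is answered locally in $\tilde{O}(\deg v)=\tilde{O}(n)$ space. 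With this bookkeeping the four rounds run verbatim: Round~1 hashes the layers and forwards to machine one those with $h(v_i)\leq 2\tilde{n}/(Um)$; Round~2 selects the sampled layers of smallest hash until their degrees sum to $\tilde{n}$; Round~3 ships $\min(\Delta,\deg v_i)$ edges per selected layer; and Round~4 assembles $H_{\leq n}$ and runs greedy. Lemma~\ref{lm:cher} and the proof of Theorem~\ref{thm:mr:kcover} then apply unchanged to the layered ground set, giving the three sketch properties, the $\tilde{O}(n)$ load per machine, and success probability $1-\frac2n$.

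The main obstacle I anticipate is keeping the local work and communication at $\tilde{O}(n)$ rather than $\tilde{O}(U)$ per element: naively hashing all $U$ layers of a high-precision element would be wasteful when $U$ is large. I would avoid this by exploiting the nested structure of the layers---the neighborhood of $v_{i+1}$ is contained in that of $v_i$, so the degrees across layers form a nonincreasing step function read directly off the sorted $\alpha_{S,v}$---and by sampling the layers of $v$ implicitly: draw the number of sampled layers of $v$ from the appropriate binomial distribution and then choose their indices, rather than scanning all $U$ thresholds. Since only $\tilde{O}(n)$ layers are sampled in total, this keeps both the computation at each machine and the total message volume proportional to the sketch size, which completes the argument.
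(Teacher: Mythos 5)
Your proposal is correct and matches the paper's own argument: the paper uses the identical reduction (replace $v$ by $U$ copies, connecting $S$ to the first $\alpha_{S,v}U$ of them, so coverage scales exactly by $U$) and the same implicit-sampling trick of first sampling an original element uniformly and then deciding which of its copies enter the sketch. Your additional bookkeeping about nested layer neighborhoods and binomial sampling is a reasonable elaboration of details the paper leaves implicit, but it is not a different route.
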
 
\begin{proof}
 Once again we reduce to unweighted \problem{$k$-cover} and show how
 to perform the sampling implicitly.  Replace each element $v$ with
 $U$ copies and connect the first $\alpha_{S,v}U$ copies of $v$ to
 $S$.  We observe that the coverage of any solution grows by exactly a
 factor $U$.

 The sketch size is logarithmic in terms of the number of elements,
 which grows by a factor $U$. To sample an element form the unweighted
 \problem{$k$-cover} instance uniformly at random, we equivalently
 first sample an element from the original \problem{fractional
   $k$-cover} instance uniformly at random, and then pick an index
 from $[1, U]$ to decide how many of its copies should appear in the
 sketch.
\end{proof}

\begin{theorem}\label{thm:mr:P} 
	There exists a four-round distributed algorithm, using $\tilde
        O(n)$ space per machine, that finds a $(1-\frac 1 e
        -2\epsilon)$-approximate solution to \problem{probabilistic $k$-cover}
        with probability $1-\frac 3 n$.
\end{theorem}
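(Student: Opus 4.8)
The plan is to reduce \problem{probabilistic $k$-cover} to ordinary (unweighted) \problem{$k$-cover} by a \emph{randomized} realization and then invoke Theorem~\ref{thm:mr:kcover}, paying for the randomization with one extra additive $\eps$ in the approximation and one extra $\frac1n$ in the failure probability. Interpret each quantity $\alpha_{S,v}$ as the probability that the edge $(S,v)$ is \emph{active}, and draw $T$ independent realizations of all edges: in realization $t$ the edge $(S,v)$ is present independently with probability $\alpha_{S,v}$. Build a combined instance $G^*$ whose element set consists of $T$ copies $(v,t)$ of each $v\in\elements$, placing an edge between $S\in\sets$ and $(v,t)$ exactly when $(S,v)$ is active in realization $t$; the sets $\sets$ are unchanged, so $G^*$ still has $n$ sets and $Tm$ elements. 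For any $\solution\subseteq\sets$ the copy $(v,t)$ is covered precisely when some $S\in\solution$ has $(S,v)$ active in realization $t$, an event of probability $1-\prod_{S\in\solution}(1-\alpha_{S,v})$; hence $\Ex\big[\cov_{G^*}(\solution)\big]=T\sum_{v}\big(1-\prod_{S\in\solution}(1-\alpha_{S,v})\big)=T\,f(\solution)$, where $f$ denotes the probabilistic coverage objective.

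Next I would show that the scaled combined coverage $g(\solution):=\tfrac1T\cov_{G^*}(\solution)$ uniformly approximates $f$. For a fixed $\solution$, $\cov_{G^*}(\solution)$ is a sum over the $T$ realizations of independent variables taking values in $[0,m]$ with mean $f(\solution)$, so Hoeffding's inequality bounds $\Pr\big[\,|g(\solution)-f(\solution)|>\eps\,\opt\,\big]$ by $2\exp(-2T\eps^2\opt^2/m^2)$, where $\opt=\max_{|\solution|=k}f(\solution)$. Taking $T=\tilde\Theta(m^2k/(\eps\,\opt)^2)$ --- a polynomial in the input size, since $\opt\ge 1/U$ --- together with a union bound over the at most $n^k$ candidate solutions makes $|g(\solution)-f(\solution)|\le\eps\,\opt$ hold \emph{simultaneously} for all size-$k$ solutions with probability at least $1-\tfrac1n$. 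It is essential that this event depends only on the realization randomness, which is independent of the hashing used to build the sketch, so the two error analyses compose by a union bound.

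Finally I would run the algorithm of Theorem~\ref{thm:mr:kcover} on $G^*$. With probability $1-\tfrac2n$ it returns $\solution'$ with $\cov_{G^*}(\solution')\ge(1-\tfrac1e-\eps)\max_\solution\cov_{G^*}(\solution)$, i.e.\ $g(\solution')\ge(1-\tfrac1e-\eps)g(\solution^*)$ for the optimum $\solution^*$ of $f$. Combining with the concentration event gives $f(\solution')\ge g(\solution')-\eps\,\opt\ge(1-\tfrac1e-\eps)(1-\eps)\opt-\eps\,\opt$, which after rescaling $\eps$ by a constant is the claimed $(1-\tfrac1e-2\eps)$-approximation; the total failure probability is $\tfrac3n$. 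The remaining point is space: we never materialize $G^*$, which has $Tm$ elements, but sample from it implicitly exactly as in the proofs of Theorems~\ref{thm:mr:F} and~\ref{thm:mr:W}. To draw a uniformly random element of $G^*$ we pick $v\in\elements$ uniformly and $t\in[T]$ uniformly, and then realize its incident edges on demand by including each neighbor $S$ of $v$ independently with probability $\alpha_{S,v}$ (truncated to $\Delta$ edges). Because the sketch keeps only $\tilde O(n)$ elements and its size is logarithmic in the number of elements --- so the $\log(Tm)$ factors grow by only a constant --- and because the on-demand edge realizations are independent of the hash-based sampling, the per-machine load stays $\tilde O(n)$. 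The main obstacle is the concentration step: one must pick $T$ large enough to beat the union bound over all $\binom nk$ solutions while arguing that the resulting (polynomial) blow-up of the element count does not affect the $\tilde O(n)$ space, and one must keep the realization randomness and the sketch randomness separate so the two guarantees compose cleanly.
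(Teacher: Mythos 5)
Your proposal is correct and takes essentially the same route as the paper: both replace each element by polynomially many independently realized copies, show via a concentration-plus-union-bound argument that the resulting unweighted coverage uniformly approximates the probabilistic objective, and then sample the copies implicitly so the sketch (whose size is only logarithmic in the number of elements) stays $\tilde O(n)$ per machine. The only difference is bookkeeping in the concentration step --- the paper uses a per-element multiplicative Chernoff bound with $\zeta = 12(n+1+\log n)U/\epsilon^2$ copies and a union bound over $2^n$ solutions, whereas you use an additive Hoeffding bound on the total coverage with a union bound over $n^k$ solutions --- and both are valid.
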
 
\begin{proof}
  Similarlay we transform an instance of \problem{probabilistic
    $k$-cover} to one of \problem{$k$-cover}: substitute each element
  $v \in \elements$ with $\zeta = \frac{12 (n+1+\log n)U}
  {\epsilon^2}$ copies of $v$, and for each set $S$ that contains $v$
  we connect $S$ to each copy of $v$ with probability $\alpha_{S,v}$.

  We show that with probability $1-\frac1n$, for all solutions
  $\solution\subseteq \sets$, the coverage of $\solution$ in the
  \problem{$k$-cover} instance is within a factor $\zeta(1\pm\epsilon
  / 2)$ of that in the original \problem{probabilistic $k$-cover}
  instance.  Fix a solution $\solution$, and let $\beta_v = 1-
  \prod_{S\in\solution}(1- \alpha_{S,v})$. The Chernoff bounds gives
  for $X$, the number of copies of $v$ covered by $\solution$, as
  follows.
  \begin{align*}
    &\Pr\Big( |X-\zeta \beta_v |\geq \zeta \beta_v \epsilon/2 \Big) 
     \quad\leq 2 \exp\Big(-\frac{ \zeta  \beta_v \epsilon^2  }{12}\Big) \\
    &= 2 \exp\Big(-\frac{ \frac{12 (n+1+\log n)U} {\epsilon^2}  \beta_v \epsilon^2  }{12}\Big) \\
    &\leq 2 \exp\Big(-\frac{ \frac{12 (n+1+\log n)} {\epsilon^2} \epsilon^2  }{12}\Big) \qquad \text{since $\beta_v \geq 1/U$,}  \\
    & \leq 2 \exp\Big(-{ { n+1+\log n}  }\Big) 
    \quad \leq 2^{-n}/n.
  \end{align*}
	
  There are $2^n$ choices for $\solution$, hence for all solutions
  $\solution\subseteq \sets$, the coverage of $\solution$ on the
  \problem{$k$-cover} instance is within the promised interval with
  probability $1-\frac1n$.

 Since the number of elements in the \problem{$k$-cover} instance is
 at most $\zeta m$, the size of the sketch grows 
 logarithmically in $U$. To sample an element form the
 \problem{$k$-cover} instance uniformly at random, we  sample an
 element from the original \problem{fractional $k$-cover} instance
 uniformly at random and connect it to each set $S\in \sets$ with
 probability $\alpha_{S,v}$.
\end{proof}

\section{Empirical Study and Results} \label{sec:empirical}
We begin this section by a brief overview of the datasets and corresponding applications 
used in our empirical study (see Table~\ref{tab:datasets}), and then move
to the methodology as well as the experiment results.  Detailed
information on the datasets is given in Appendix~\ref{sec:full-data}.

\begin{table}[ht]
\caption{General information about our datasets.}\label{tab:datasets}
\vskip2mm
\small
\centering
\begin{tabular}{l>{\hskip-5mm}c>{\hskip-1mm}r>{\hskip-1mm}rr}
\hline
Name    & Type           & \multicolumn{1}{c}{$|\sets|$} & \multicolumn{1}{c}{$|\elements|$} & \multicolumn{1}{c}{$|E|$} \\
\hline
\LJinst       & dominating set &
4M   &  4M   & 73B    \\
\LJinstx       & dominating set &
4M   &  4M   & 3.4B    \\
\DBLPinst     & dominating set &     
320K     &    320K   &  330M    \\
\DBLPinstx     & dominating set &     
320K     &    320K   & 27M    \\
\Guteinst     & bag of words &   42K   &  100M   &  1B    \\
\SGuteinst    & bag of words &   925   &  11M   &  27M \\
\Reuters      & bag of words & 200K & 140K & 15M \\
\Hardinst{A}  & planted coverage &   10K   &    10K   &  1.2M    \\
\Hardinst{B}  & planted coverage&   100K  &    1M   &  1.2B    \\
\Hardinst{C}  & planted coverage &   100K  &    10M  &  2.4B    \\
\Hardinst{D}  & planted coverage &   101K &    10M &  1.2B \\
\WikiMain   & contribution graph & 2.9M & 11M & 75M \\
\WikiTalk   & contribution graph & 1.7M & 1M & 7.3M \\
\NewsTwenty & feature selection & 1.4M & 200M & 4.3B \\  
\hline
\end{tabular}
\end{table}

Our empirical study is based on five types of instances covering a variety of applications:
\begin{packed_enum}
\item \problem{Dominating-set} instances are formed by considering vertices
of a graph as {\em sets} and their two- or three-hop neighborhoods as
{\em elements} they dominate. The  \problem{Dominating-set} problem is
motivated by sensor placement and influence maximization applications~\cite{nips13}. 
\item The ``bag of words'' instances correspond to documents and
  bigrams they contain.  The goal is pick a few documents that cover
  many bigrams together. This instance highlights the application of coverage maximization
  in document summarization, or finding representative entities in a corpus~\cite{nips13}.
\item We have synthetic ``planted set cover'' instances that are synthetically generated,
and known to be hard for greedy algorithms.
\item ``Contribution graphs'' model interaction between users on a set
  of documents.  We seek a small subset of users that collectively
  have contributed to a majority of documents. This, in turn, has application in 
  team formation~\cite{dblp-snap}.
\item A \problem{feature-selection} instance proposes a
  \problem{column subset selection} problem on a matrix of news
  articles and their features. This application is described in Section ~\ref{sec:featureselection}.
\end{packed_enum}

We remark that, to the best of our knowledge, some of these datasets
are an order of magnitude larger than what has been considered in prior
work.

\subsection{Approach}

\newcommand{\stochgreedy}{\textsc{StochasticGreedy}}

Recall that the sketch construction is based on two types of prunings
for edges and vertices of the input graph:
\vspace{-2ex}
\begin{packed_item}
	\item 
          subsampling the elements, and
	\item 
          removing edges from large-degree elements.
\end{packed_item}
\vspace{-1ex}
The theoretical definition of the sketch provides (i)~the
probability of sampling an element, and (ii)~the upper bound on the
degree of the elements. Though these two parameters are almost
tight in theory, in practice one can use smaller values to get 
desirable solutions.
Here we parameterize our algorithm by $\rho$ and $\sigma$,
where $\rho$ is the probability of sampling elements, and
$\sigma$ is the upper bound on element degrees. We
investigate this in our experiments.

The \stochgreedy\ algorithm~\cite{mirzasoleiman2014lazier} achieves
$1-\frac 1 e -\eps$ approximation to maximizing monotone submodular
functions (hence coverage functions) with $O(n \log(1/\eps))$
calls to the submodular function. This is theoretically the
fastest known $1-\frac 1 e -\epsilon$ approximation algorithm for
coverage maximization, and is also the most efficient in
practice for miximizing monotone submodular functions, when the input
is very large. We plug it into our MapReduce algorithm,
which then runs much faster, while losing
very little in terms of quality. 
For smaller instances we compare our algorithm to \stochgreedy, but
for larger ones we provide convergence numbers to argue that the two
should get very similar coverage results.

\paragraph{LiveJournal social network} We try different values for
$\rho, \sigma, k$ when running our algorithm on \LJinst; see 
Figure~\ref{fig:lj}. 
For small $k$, the result improves as $\sigma$ grows, but
increasing $\rho$ has no significant effect. 
On the other hand, the improvement for larger $k$ comes from
increasing $\rho$ while $\sigma$ is not as important. This observation
matches the definition of our sketch, in which the degree bound is
decreasing in $k$ and the sampling rate is increasing in $k$.

\begin{figure*}
	\centering
	\includegraphics[scale=0.48]{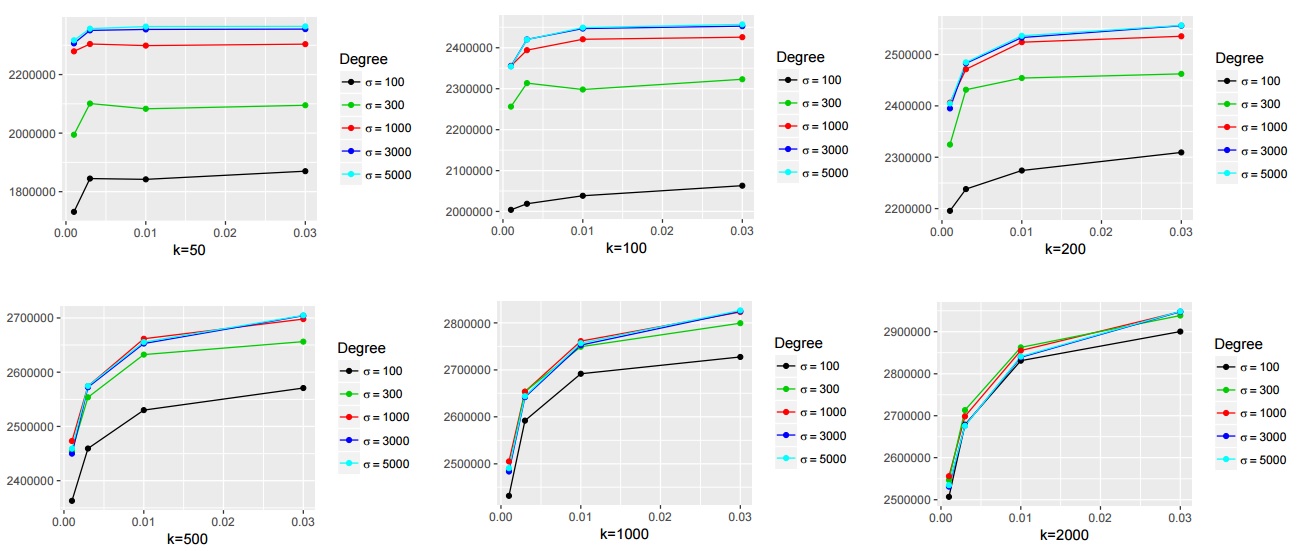}
	\caption{For the \problem{dominating-set} instance \LJinst,
          these plots show the number of covered nodes against the
          relative size of the sketch with
          $\rho\in[10^{-3},3\cdot10^{-2}]$,
          $\sigma\in[100, 5000]$, and $k\in [10^2,10^4]$.  Curves in
          one plot correspond to different choices for $\sigma$.
          With large $\sigma$, the results of some runs are
          indistinguishable from the one next to it in the plot, hence
          invisible.}
	\label{fig:lj}
\end{figure*}

\begin{figure}
  \centering \includegraphics[scale=0.55]{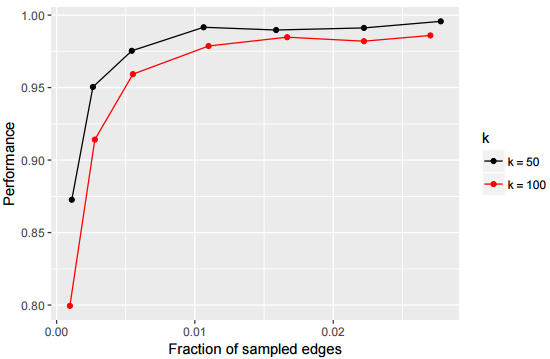}  
  \caption{The results for \DBLPinst\ are shown for
    $\rho\in[2\cdot10^{-3},5\cdot10^{-2}]$, $\sigma=100$.  We plot our
    performance relative to \stochgreedy\ against the fraction of
    edges from the input graph retained in our sketch.}
  \label{fig:approx-s-size} \end{figure}

\begin{figure}[t]
  \centering
  \includegraphics[scale=0.55]{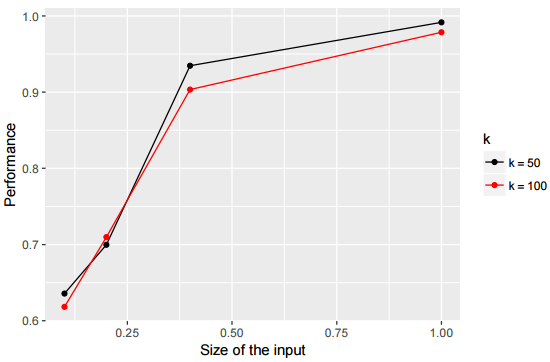}  
  \caption{The above are results of running the algorithm on the
    sampled version of \DBLPinst\ with $\rho=0.02$, $\sigma=100$. The
    $x$ axis  denotes the size of the sampled graph relative to
    the whole. The $y$ axis shows the quality relative to
    \stochgreedy.}
  \label{fig:approx-inp-size}
\end{figure}

\paragraph{DBLP coauthorship network} 
Figure~\ref{fig:approx-s-size} compares results of our
algorithm on \DBLPinst\ (with a range of parameters) to 
that of \stochgreedy. Each point in these
plots represents the mean of three independent runs. Interestingly, a
sketch with merely $3\%$ of the memory footprint of the input graph
attains $\%99.6$ of the quality of \stochgreedy.


\begin{figure}[h]
  \centering
  \includegraphics[scale=0.61]{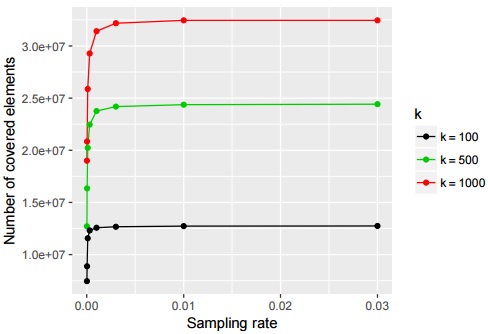}
  \caption{
Here we plot
the number of covered bigrams against $\rho$ for
\Guteinst\ with $\rho\in[10^{-5},3\cdot10^{-2}]$, $\sigma\in
                [10^2,10^4]$, and $k\in [10^2,1000^3]$.
The curves
    corresponding to different values of $\sigma$ are practically
    indistinguishable.}
  \label{fig:gut}
\end{figure}

We run our algorithm on induced subgraphs of \DBLPinst\ of varying
sizes; 
see Figure~\ref{fig:approx-s-size}. 
Interestingly, the performance of our algorithm improves the larger
the sampled graph becomes. In other words, if one finds parameters
$\rho$ and $\sigma$ on a subgraph of the input and applies it to the
whole graph, one does not lose much in the performance.  


%

\paragraph{Project Gutenberg dataset }
We run our algorithm on \Guteinst\ with different values for
$\rho$ and $\sigma$.  As shown in Figure~\ref{fig:gut}, the outcome of
the algorithm converges quickly.  In other words, for $\rho= 0.003$ and
$\sigma = 100$, the outcome of \stochgreedy\ on our sketch and on the
input graph are quite similar, while our sketch is $600$ times
smaller.

\paragraph{Other datasets}

Due to space constraints we cannot report detailed results for the
other datasets.  However, Table~\ref{tab:misc} shows that for these
datasets, a small sketch suffices to get close to the single-machine
greedy solution.  In fact, these are small enough for the greedy
algorithm to run on one machine.

\begin{table}[h!]
\caption{Results for other datasets.}\label{tab:misc}
\vskip2mm
\centering
\begin{tabular}{l<{\hskip-3mm}r<{\hskip-3mm}r|l<{\hskip-4mm}r<{\hskip-3mm}c}
\hline
Instance    & Sketch & Quality &
Instance    & Sketch & Quality \\
\hline
\WikiMain   & 0.06\% & 94.4\% &
\DBLPinstx  & 1.7\% & 92\% \\
\WikiMain   & 2.4\% & 99.5\% &
\DBLPinstx  & 3.1\% & 96\% \\
\WikiMain   & 7.7\% & 99.9\% &
\Reuters    & 1.2\% & 87\% \\
\WikiTalk   & 1.5\% & 99.2\% &
\Reuters    & 3.6\% & 92\% \\
\Hardinst{A}& 8.2\% & 96\% &
\Reuters    & 10\% & 96\% \\
\hline
%
%
\end{tabular}
\end{table}

The \LJinstx\ instance is too big for the single-machine greedy
algorithm.  Still we can compare our result to what is achievable for
a 10\% sample of the instance (with about 340 million edges).  With a
0.8\% footprint we obtain a solution of essentially the same quality.
With footprints 0.3\%, 0.2\%, 0.1\% and 0.75\%, we lose no more than
1\%, 3\%, 4\% and 9\%, respectively.

Except for the smallest, the planted instances are also too big for the
greedy algorithm.  Nonetheless, looking at the numbers, e.g.,
for \Hardinst{B}, we notice that the quality of the greedy solution
is almost the same for sketches of relative
sizes 0.3\% and 42\%---the latter has about 500 million edges.  In
particular, sketches of relative size 0.3\%, 1\% and 10\% produce
3\%, 2\% and 1\% error, respectively, compared to the sketch of size
42\%.  The results are similar for the other two planted instances.

\subsection{Feature-selection Problem}
\label{sec:featureselection}
Our algorithm is applicable to the \problem{feature-selection}
problem, which is a first step in many learning-based
applications~\cite{GE2003:feature}.
It is often too expensive to carry out a learning task on the entire
matrix or there might be overfitting concerns.  Typically a small
subset of ``representative'' features are picked carefully, so as not
to affect the overall quality of the learning task.  In practice, we
gauge the performance of feature selection by {\em reconstruction
  error} or {\em prediction accuracy}; see~\cite{ABFMRZ16:ICML} for
details of evaluation criteria.

In order to compare our preliminary results to previous
work~\cite{ABFMRZ16:ICML}, we model the problem as a \problem{maximum
  $k$-cover} instance by treating columns (i.e., features) as sets and
{\em pairs of rows} (i.e., pairs of sample points) as elements. We say a row {\em covers} a pair
of rows, if that column (feature) is active for both of those rows (sample points), and seek to pick $k$ columns that
{\em cover as many pairs the rows} as possible~\footnote{We also studied covering rows as opposed to covering pairs of rows, but that approach was not effective.}.


Table~\ref{tab:feature} compares our results to prior work.  Numbers
show prediction accuracy in percentage.  For description of the data
set and the first four algorithms, see~\cite{ABFMRZ16:ICML}.  We note
that these algorithms may only run on a $8\%$ sample of the dataset,
hence poorer performance compared to the latter two.
The fifth column exhibits a distributed version of \algoname{2-P} (the
two-phase optimization): Features are carefully partitioned across
many machines via taking into account some cut-based objective, and
then the two-phase optimization handles each part separately.  It is
noteworthy that the (distributed) partitioning phase itself takes
significant amount of time to run.
The last column corresponds to our distributed \problem{$k$-cover}
algorithm, which is more efficient than the algorithm of the fifth
column.  The results are similar to that of \algoname{Part}.

\begin{table}[ht]
  \caption{Results for \problem{feature selection} on
    \NewsTwenty\ dataset.}\label{tab:feature}\vspace{2mm}
  \begin{center}
    \begin{tabular}{rcccccc}
      \hline
      \multicolumn{1}{c}{$k$} &
      {\algoname{Rnd}} & 
      {\algoname{\iffalse $2$-Phase\else$2$-P\fi}} & 
      {\algoname{\iffalse DistGreedy\else DG\fi}} & 
      {\algoname{PCA}} & 
      \hspace{-1mm}\algoname{Part}\hspace{-1mm} &  
      \algoname{Cover}\hspace{-1mm} 
      \\
      \hline
      500 & 54.9 & 81.8 &  80.2 & 85.8 & 84.5 & 86.2 \\
      1000 & 59.2 & 84.4 & 82.9 & 88.6 & 88.4 & 89.4 \\
      2500 & 67.6 & 87.9 & 85.5 & 90.6 & 92.3 & 91.2 \\
      \hline
    \end{tabular}
  \end{center}
\end{table}


We emphasize that we can run our algorithm on much larger datasets;
the evidence of this was provided above where we reported results for
\LJinst, for instance.

\section{Conclusions}
In this paper, we present almost optimal distributed algorithms for coverage problems. 
Our algorithms beat the previous ones in several fronts: e.g.,
(i) they provably acheive the optimal approximation factors for these problems, 
(ii) they run in only four rounds of computation (as opposed to logarithmic number of rounds), and
(iii) their space complexity is independent of the number of elements in the ground set.
Moreover, our algorithms can handle coverage problems with huge subsets 
(in which even one subset of the input may not fit on a single machine). 
Our empirical study shows practical superiority of our algorithms. 
Finally, we identified a new application of our algorithms in feature selection,
and presented preliminary results for this application. It would be nice to explore
this application in more details in the future.

{
\bibliographystyle{plain}
\bibliography{k-cover}
}


\appendix
\section{Omitted Proofs}
We use the following lemma to prove Theorems~\ref{thm:mr:kcover} and~\ref{thm:mr:epscover}.

\begin{lemma} [From~\cite{ours}]\label{thm:sketchAlg}
	For any $\eps\in (0,1]$ and any set-cover graph $G$, there exist
	sketch-based algorithms that succeed with probability
	$1-\frac 1 n$ in finding the following.
	\vspace{-2ex}
	\begin{packed_enum}
		\item
		One finds a $(1-\frac 1 e -\eps)$-approximate solution to
		\problem{$k$-cover} on $G$, working on a sketch with
		$\tilde{O}(n)$ edges.
		\item
		The other  
		finds a $(1+\eps)\log\frac 1
		{\lambda}$ approximate solution to \problem{set cover with $\lambda$
			outliers} on $G$. The sketches used altogether have
		$\tilde{O}(n/\lambda^3) =\tilde{O}_{\lambda}(n)$ edges.
	\end{packed_enum}
	\vspace{-2ex}
\end{lemma}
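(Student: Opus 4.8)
The plan is to reduce both parts of the lemma to a single \emph{coverage-preservation} statement for the sketch $H = H_{\leq n}$ and then feed the preserved instance into off-the-shelf offline greedy analyses. Write $p=\Theta(\tilde{n}/m)$ for the effective element-sampling rate, and for a subfamily $\solution\subseteq\sets$ let $\cov_H(\solution)$ count the sampled elements that $\solution$ covers inside $H$. The heart of the argument is to show that the rescaled quantity $\tfrac1p\cov_H(\solution)$ tracks $\cov(\solution)$ well enough, simultaneously over all relevant $\solution$, that an $\alpha$-approximate solution computed on $H$ stays $(\alpha-\eps)$-approximate on $G$. Granting this, Part~1 follows because the offline \greedy\ algorithm is a $(1-\tfrac1e)$-approximation for \problem{$k$-cover} on $H$, which transfers to $(1-\tfrac1e-\eps)$ on $G$; by property~(3) the sketch carries only $\tilde{O}(n)$ edges.

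For the coverage-preservation step I would first fix one solution $\solution$ and establish concentration. Since elements are sampled uniformly and independently (property~(1)), $\cov_H(\solution)$ is a sum of indicators each bounded by $1$, with mean $p\cdot\cov(\solution)$ in the idealized uncapped sketch, so a Chernoff bound yields a $(1\pm\eps)$ multiplicative guarantee whenever $p\cdot\cov(\solution)=\Omega\!\big(\eps^{-2}\log(\mathrm{fail}^{-1})\big)$. Property~(3), which forces at least $\tilde{n}=\tfrac{24n\delta\log(1/\eps)\log n}{(1-\eps)\eps^3}$ edges into $H$, is precisely what guarantees enough sampled elements land in the support of any sufficiently large solution for this tail bound to bite.

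\textbf{The main obstacle} is twofold, covering both the degree cap and the exponential number of solutions. First, the per-element cap $\Delta=\tfrac{n\log(1/\eps)}{\eps k}$ (property~(2)) introduces a downward bias: a sampled element $v$ of degree $D>\Delta$ attached to $\solution$ by only a few edges may drop all of them when just $\Delta$ of its $D$ incident edges survive, so $v$ is recorded as uncovered. I would bound the total such loss, using that the cap is calibrated so that an element is likely missed only when a vanishing fraction of its incident edges reach $\solution$; a charging argument then shows the expected lost coverage is at most $\eps\,p\,\cov(\solution)$, absorbed into the $\eps$ slack. Second, a naive union bound over all $\binom{n}{k}\le n^k$ candidate solutions would demand per-solution failure probability $n^{-\Theta(k)}$, inflating the sketch by a factor of $k$ and yielding only $\tilde{O}(nk)$ edges (exactly what one gets from maintaining $n$ independent $\ell_0$ sketches). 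The crucial idea I would import from~\cite{ours} is to avoid charging every solution: it suffices that the estimate be accurate for the optimum and that no low-coverage solution be inflated \emph{above} $(1-\eps)\opt_k$, an event controlled by far fewer than $n^k$ relevant configurations, so that the $\log n$ factor already present in $\tilde{n}$ suffices and the sketch remains at $\tilde{O}(n)$ edges.

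Finally, for Part~2 I would run the standard greedy for \problem{set cover with $\lambda$ outliers}, which terminates within $(1+\eps)\log\tfrac1\lambda$ rounds once a $(1-\lambda)$-fraction of \elements\ is covered. The complication is that greedy must keep estimating accurately down to the last $\lambda m$ uncovered elements, where relative sampling error is largest; compensating requires boosting the sampling rate by a $\mathrm{poly}(1/\lambda)$ factor and repeating the construction for each of the $O(\log_{1-\eps}m)$ geometric guesses of the solution size. Propagating these blow-ups through the coverage-preservation bound gives sketches of total size $\tilde{O}(n/\lambda^3)=\tilde{O}_{\lambda}(n)$, and a union bound over the $O(\log m)$ guesses keeps the overall failure probability at $1/n$.
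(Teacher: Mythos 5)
This lemma is imported verbatim from~\cite{ours}; the present paper gives no proof of it (it only cites the companion work), so there is no in-paper argument to measure you against, and I can only assess your reconstruction on its own terms. Your overall architecture --- reduce both parts to a coverage-preservation statement for $H_{\leq n}$, prove concentration for a fixed solution via Chernoff, then control the degree-cap bias and the multiplicity of candidate solutions --- is the right shape, and you correctly identify the two places where such an argument could fail.

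However, at exactly those two places your proposal asserts rather than proves. The decisive step is uniformity over solutions: you claim that overestimation of low-coverage solutions is ``controlled by far fewer than $n^k$ relevant configurations,'' but you give no mechanism for this, and the family of size-$k$ solutions with true coverage below $(1-\eps)\opt_k$ can still be essentially all of the $\binom{n}{k}$ candidates. What actually rescues the bound is not a reduction in the number of configurations but a quantitative interplay with property~(2): the cap $\Delta=\frac{n\log(1/\eps)}{\eps k}$ forces the $\tilde n$ edges of the sketch to be spread over at least $\tilde n/\Delta=\Omega(k\log n/\eps^2)$ sampled elements, which is precisely the scale at which the relevant tail bounds can absorb an $n^{O(k)}$ union bound without inflating the sketch to $\tilde O(nk)$; omitting this calculation omits the content of the lemma. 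Your treatment of the degree cap has a parallel problem: Round~3 retains $\Delta$ \emph{arbitrary} (worst-case) edges, so an element attached to $\solution$ by few of its many edges is not merely ``likely'' to be missed --- it can be missed with certainty --- and a per-element probabilistic charging argument does not apply. The loss must instead be bounded deterministically, e.g.\ by showing that the elements capable of being miscounted against a size-$k$ solution (those of degree at least $\Delta$) are few relative to $\opt_k$, again exploiting the specific form of $\Delta$, so that the error is an $\eps$-fraction of the optimum. As written, your proposal is a plausible table of contents for the proof in~\cite{ours}, but the two entries that constitute its substance are left blank.
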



\begin{proofof}{Theorem \ref{thm:mr:epscover}}
	We run $\log_{1+\eps/3} n$ copies of the first three stages of
	Algorithm~\ref{Alg:MR} (simultaneously) to construct the
	$\log_{1+\eps/3} n$ different sketches required by
	Lemma~\ref{thm:sketchAlg}. 
        The discussion in Theorem~\ref{thm:mr:kcover} implies that each
	copy of the sketch is constructed 
	correctly, with probability $1-\frac 1 {n^2}$. Together
	with Lemma~\ref{thm:sketchAlg} this proves that our algorithm
	gives a $(1+\eps)\log\frac 1 {\lambda}$-approximate solution
	to \problem{set cover with $\lambda$ outliers}, with
	probability $1-\frac 1 n - \log_{1+\eps/3} n\frac 1 {n^2} > 1
	- \frac 2 n$.
\end{proofof}

\section{Description of Datasets}
\label{sec:full-data}

We run our empirical study on five types of instances. 
A summary is presented in Table~\ref{tab:datasets}. 
\problem{Domainating set} instances include \LJinst, \LJinstx, \DBLPinst\ and
\DBLPinstx,  where the objective is to cover the nodes via multi-hop
neighborhoods.
We have two sets of {\em bag-of-words} instances, where the goal is to cover
as many words/bigrams via selecting a few documents: \Guteinst\ and
\SGuteinst\ for books and \Reuters\ for news articles.
Instances \WikiMain\ and \WikiTalk\ are our contribution graphs where we want to
find a set of users who revised many articles.
Finally we have some \problem{planted set-cover} instances that are known
to fool the greedy algorithm: \Hardinst{A}, etc.

\paragraph{Dominating-set instances}
We build \problem{set-cover} instances based on
graphs for LiveJournal (social network)~\cite{lj-snap} and DBLP
(database of coauthorship)~\cite{dblp-snap}. 
The vertices of these graphs form both $\sets$ and $\elements$ in the
new instances \LJinst\ and \DBLPinst, and a set $S$ covers an element $v$
if and only if $v$ is within the three-hop neighborhood of $S$ in the
original graph---reachable by a path of length at most
three.  Similarly, we build two smaller instances \DBLPinstx\ and
\LJinstx\ that are based on
2-hop neighborhoods.  To show scalability further, we also use
a sampled version of \LJinst: each vertex is picked with a fixed
probability.

\paragraph{``Bag of words'' instances}
We build \Guteinst\ based on documents and bigrams
inside them.  The starting point is the set of 50,284 books on Project
Gutenberg with IDs less than 53,000~\cite{gute:eng},\footnote{The
	upper bound was picked because the project claims to host
	52,031 books~\cite{gute:eng}.}  
downloaded via
~\cite{python:gute}.
We then remove the English
stopwords~\cite{NLTK}, and throw away 8,568 books
we think are not in English, leaving us with 41,716 books.  This
was done heuristically: any book with more than half its distinct
words missing from an English word list~\cite{enwords.txt} was deemed
non-English. (Non-English books in the collection
make the \problem{set-cover} instances much simpler, since picking books from
different languages allows us to cover a lot of new words/bigrams.
This process reduces the number of distinct words by about 63\%.)
Natural Language Toolkit~\cite{NLTK} was then used to turn words into
their stems, before generating the list of bigrams in each book.
A smaller version of this dataset, called \SGuteinst, was also
generated using books with IDs less than 1,000.
We also consider another bag-of-words instance, \Reuters, which is a
collection of Reuters news articles~\cite{reuters} written 1996--1997.
The words in each article have already been changed to their stems.  
There are four medium-size subdatasets 
in the collection.  We only report results on the \texttt{p0} part.
The others behave similarly.

\paragraph{Contribution graphs}
The Wikipedia edit history (until 2008) is available~\cite{wiki-snap}.
We build two datasets \WikiMain\ and
\WikiTalk\ based on this.  
Users have made revisions in either namespace (main article texts or talk pages).
We place edges between
users and pages they have revised.  A \problem{set-cover}
solution then consists of a group of users who have revised all (or
many) pages.

\paragraph{Planted set-cover instances}
We also generate instances where a small set cover is planted in an
otherwise random graph. 
The advantage is that we know the optimum solution even
for large instances.  We build such
graphs with parameters $k$, $m$, $k'$ and $\epsilon$.  Such instances
have $m$ elements and $k+k'$ sets,  $k$ of which
have the same size and perfectly cover the entire ground set;
the other $k'$ have random elements but are a factor
$1+\epsilon$ larger than the planted sets.  We use four such graphs in
our experiments with 
parameters mentioned in Table~\ref{tab:planted}.

\begin{table}
	\caption{Parameters used to generated ``planted'' datasets.}\label{tab:planted}
	\vskip2mm
	\centering
	\begin{tabular}{lrrrr}
		\hline
		Name & \multicolumn{1}{c}{$k$} & \multicolumn{1}{c}{$m$} & \multicolumn{1}{c}{$k'$} & \multicolumn{1}{c}{$\epsilon$} \\
		\hline
		\Hardinst{A} & 100   & 10,000     & 10,000  & 0.2 \\
		\Hardinst{B} & 100   & 1,000,000  & 100,000 & 0.2 \\
		\Hardinst{C} & 500   & 10,000,000 & 100,000 & 0.2 \\
		\Hardinst{D} & 1,000 & 10,000,000 & 100,000 & 0.2 \\
		\hline
	\end{tabular}
\end{table}

\paragraph{Feature-selection instances}
Built from a column subset selection instance, sets correspond to
columns and elements corresopnd to rows.  We aim to pick a small
number of columns that collectively appear in a majority of rows.  We
focus on \NewsTwenty\ dataset that is discussed in detail
in~\cite{ABFMRZ16:ICML}.

\end{document}